\documentclass{article}
\usepackage{amsmath}
\usepackage{amssymb}
\usepackage{mathdots}

\makeatletter

\usepackage{amsthm}
\usepackage{cases}

\theoremstyle{plain}
\newtheorem{thm}{Theorem}
\numberwithin{thm}{section}
\newtheorem{prop}{Proposition}
\numberwithin{prop}{section}
\newtheorem{lem}{Lemma}
\numberwithin{lem}{section}

\theoremstyle{definition}
\newtheorem{remark}{Remark}
\numberwithin{remark}{section}
\newtheorem{exa}{Example}
\numberwithin{exa}{section}

\author{Han Feng\thanks{Department of Statistics, University of Warwick, Coventry, CV4 7AL, UK. H.Feng@warwick.ac.uk} 
 \and David Hobson\thanks{Department of Statistics, University of Warwick, Coventry, CV4 7AL, UK. D.Hobson@warwick.ac.uk}}

\makeatother

\usepackage[english]{babel}

\begin{document}

\title{\textbf{Gambling in Contests with Regret}}
\maketitle
\begin{abstract}
This paper discusses the gambling contest introduced in Seel \& Strack~\cite{GamblinginContests}
and considers the impact of adding a penalty associated with failure
to follow a winning strategy.

The Seel \& Strack model consists of $n$-agents each of whom privately
observes a transient diffusion process and chooses when to stop it.
The player with the highest stopped value wins the contest, and each
player's objective is to maximise their probability of winning the
contest. We give a new derivation of the results of Seel \& Strack~\cite{GamblinginContests}
based on a Lagrangian approach. Moreover, we consider an extension
of the problem in which in the case when an agent is penalised when
their strategy is suboptimal, in the sense that they do not win the
contest, but there existed an alternative strategy which would have
resulted in victory. 
\end{abstract}

\section{Introduction}

In \cite{GamblinginContests}, Seel \& Strack introduced a model of
a gambling contest between agents in which the objective of each agent
is not to maximise her return, but rather to maximise the probability
that her return is highest amongst the set of agents. One motivation
for studying such a problem is that it provides a stylised model for
competing fund managers, only the most successful of whom will be
given funds to invest over the next time period. Another distinct
strand of the literature on modelling competition between fund managers
is represented by Basak and Makarov~\cite{Basak}.

In the Seel \& Strack paper each agent observes a stochastic process
and chooses a stopping time to produce a stopped value. The agent
wins the contest if her stopped value is greater than the stopped
value of every other agent, and the objective of the agent is to maximise
the probability that she wins the contest. Our aim is to add a behavioural
finance aspect to the problem. Again the objective of the agent is
to maximise her chances of winning the contest, but now she is penalised
if she has not won the contest, and she has behaved sub-optimally,
in the sense that there was an alternative strategy which would have
led to her winning the contest. Thus a fund manager who has followed
a poor strategy is not merely given a new role within the firm, but
instead is terminated with disgrace.

Although the problem described in \cite{GamblinginContests} is very
simple, the solution is remarkably rich and subtle. Firstly, in equilibrium
agents must use randomised strategies, so that the level at which
the agent should stop is stochastic. Secondly, the set of values at
which the agent should stop forms an interval which is bounded above.
Several variants are discussed in Seel \& Strack, including the extension
to the asymmetric case where the starting values of the processes
observed by the agents are different.

We will consider the following variant of the problem. The agent's
choice of stopping rule determines her stopped value. But if with
hindsight we look at the best possible time she could have chosen
then we get a maximum value she might have attained. We consider a
problem in which the agent receives a reward of 1 if her stopped value
is the highest across all agents, but she is penalised $K$ if her
stopped value is not highest, and if she had stopped at the maximum
value she might have attained then she would have been the winner.
If she is not the winner, and there is no strategy she might have
followed which would led to her being the winner then her score is
zero.

In fact we consider three variants of the problem, in which an omniscient
being (or the agent's supervisor) penalises the agent for stopping
too soon, for stopping too late, or for stopping too soon or too late.
In the first case the agent faces regret over stopping too soon, and
we consider the maximum value to be the maximum value attained by
her process after the moment she chose to stop. In the second case
the maximum is taken only over that part of the path which occurs
before the chosen stopping time and the agent faces regret over stopping
too late. In the third case we take the maximum over the whole path.

Our results are that in the first problem, the effect of the penalty
is precisely equivalent to an increase in the number of opponents.
An increase in $K$ incentives the agent to aim for higher values,
at the cost of stopping at low values more often. This is the same
as the effect of competition from more opposing agents. In the second
problem, which is both harder and more interesting, the optimal strategy
is modified in a more subtle way. This case is relevant if the agent's
process is unobservable from the point at which is is stopped, for
instance if it is the gains from trade process arising from a dynamic
investment strategy chosen by the agent. Now the agent faces a risk
of a penalty whenever she stops below the value of the current maximum.
For this reason she is reluctant to do so, although it is also sub-optimal
to wait until her process hits zero, as this is a sure losing strategy.
An increase in $K$ incentives her to stop more quickly. The third
problem might be expected to be a combination of the two previous
problems, but in fact there is a natural simplification which leads
to the optimum being the same solution as the original Seel \& Strack
problem.

The remainder of this paper is constructed as follows. In the next
section we describe the original contest introduced in Seel \& Strack
\cite{GamblinginContests}. We rederive the Nash equilibrium, using
a different approach based on Lagrangian method. Then in Section \ref{sec:Regrets}
we introduce the problem with regret, which we then solve in the three
cases described above in Sections \ref{sec:Future-regrets}, \ref{sec:Past-regrets}
and \ref{sec:All-regrets}. Finally in Section \ref{sec:Derivation}
we explain the origin of the optimal multipliers and the candidate
Nash equilibrium distribution. The ideas of the proofs are different
to those in \cite{GamblinginContests} in that instead of trying to
write down the value function for the agents we use a Lagrangian sufficiency
theorem. This brings new insights and yields a simpler proof even
in the standard problem of Seel \& Strack and facilitates our analysis
in the extended problem.

\section{Contest without regret\label{sec:Without-regrets}}

\subsection{The model\label{sub:The-model}}

There are $n$ players with labels $i\in I=\{1,2,\ldots,n\}$ who
take part in the contest. Player $i$ privately observes the continuous-time
realisation of a Brownian motion $X^{i}=(X_{t}^{i})_{t\in\mathbb{R}^{+}}$
absorbed at zero with $X_{0}^{i}=x_{0}$ where $x_{0}$ is independent
of $i$ and positive. Let $\mathcal{F}_{t}^{i}=\sigma(\{X_{s}^{i}:s<t\})$
and set $\mathbb{F}^{i}=(\mathcal{F}_{t}^{i})_{t\geq0}$.

The space of strategies for agent $i$ is the space of $\mathbb{F}^{i}$-stopping
times $\tau^{i}$. Since zero is absorbing for $X^{i}$, without loss
of generality we may restrict attention to $\tau^{i}\leq H_{0}^{i}=\inf\{t\geq0:X_{t}^{i}=0\}$.
Player $i$ observes her own process $X^{i}$, but not $X^{j}$
for $j\neq i$; nor does she observe the stopping times chosen by the
other agents. Moreover, the processes $X^{i}$ are independent.

The player who stops at the highest value wins unit reward, that is,
$\forall i\in I$, player $i$ wins $1$ if she stops at time $\tau^{i}$
such that $X_{\tau^{i}}^{i}>X_{\tau^{j}}^{j}$ $\forall j\neq i$.
If there are $k$ players who stop at the equal highest value then
these players each win $\frac{1}{k}$. Therefore player $i$ with
stopping value $X_{\tau^{i}}^{i}$ receives payoff 
\[
\frac{1}{k}\mathbf{1}_{\{X_{\tau^{i}}^{i}=\max_{j\in I}X_{\tau^{j}}^{j}\}},
\]
where $k=\left|\left\{ i\in I:X_{\tau^{i}}^{i}=\max_{j\in I}X_{\tau^{j}}^{j}\right\} \right|$.

The key insight of Seel \& Strack~\cite{GamblinginContests} is to
observe that the problem of choosing the optimal stopping time can
be reduced to a problem of finding the optimal law for $X_{\tau}$
or equivalently an optimal target distribution. The payoffs to the
agents only depend upon $\tau^{i}$ via the distribution of $X_{\tau^{i}}^{i}$.
Hence, the problem can be considered in two stages, firstly find an
optimal target distribution $F^{i}$, and then verify that there is
a choice of $\tau^{i}$ such that $X_{\tau^{i}}^{i}$ has law $F^{i}$.
But, the problem of finding $\tau$ such that $X_{\tau}$ has law
$F$ is a classical problem in probability theory, and is known as
the Skorokhod embedding problem (Skorokhod~\cite{Skorokhod:65}).
Since $X$ is a Brownian motion started at $x_{0}$ and absorbed at
$0$, any distribution on $\mathbb{R}^{+}$ with mean less than or
equal to $x_{0}$ can be embedded with a finite stopping time $\tau$
(and conversely, for any $\tau$ the law of $X_{\tau}$ has mean less
than or equal to $x_{0}$).

Note that there are multiple solutions to the Skorokhod embedding
problem for $F$, so there will be several alternative stopping rules
which will bring equal probability of success to an agent. However,
we expect that the optimal target distribution is unique.

Our aim is to find Nash equilibria for the problem. By the above remarks,
a Nash equilibria can be identified with a family of distribution
functions $(F^{i})_{i\in I}$. We will say that $(F^{i})_{i\in I}$
is a Nash equilibrium if, for each $i\in I$, if the other agents
use stopping rules $\tau^{j}$ such that $X_{\tau^{j}}^{j}\sim F^{j}$,
then the optimal target distribution for agent $i$ is $F^{i}$, and
she may use any stopping rule $\tau^{i}$ such that $X_{\tau^{i}}^{i}\sim F^{i}$.
We will say a Nash equilibrium is symmetric if $F^{i}$ does not depend
on $i$, and we will say that a Nash equilibrium is atom-free if each
$F^{i}$ is atom-free. Given the symmetry of the situation in the
sense that each agent observes a martingale process started from the
same level $x_{0}$, it seems natural that a Nash equilibria is symmetric.
Moreover, simple arguments over rearranging mass can be used to show
that it is never optimal for two agents to put mass at the same positive
point $x$ --- either of them could benefit by modifying the target
distribution to put a proportion $N/(N+1)$ of this mass at $(x+N^{-2})$
and a proportion $1/(N+1)$ at $(x-N^{-1})$ -- and then it is possible
to deduce that any optimal solution is atom-free.

\begin{thm}\label{thm:Seel}{[}Seel \& Strack~\cite{GamblinginContests}{]}
Any Nash equilibrium has the property that it is symmetric and atom-free.
\end{thm}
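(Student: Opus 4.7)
The claim has two parts: any Nash equilibrium is atom-free, and any Nash equilibrium is symmetric. I plan to handle atom-freeness first, since the symmetry argument becomes much cleaner once all $F^i$ are continuous.

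\textbf{Atom-freeness.} The first step is to show that no two agents can place mass at a common strictly positive point. Following the hint in the excerpt, if $F^i$ and $F^j$ both have atoms at $x>0$ of sizes $m_i, m_j > 0$, then the rearrangement proposed in the text replaces agent $i$'s atom by a fraction $N/(N+1)$ at $x+N^{-2}$ and a fraction $1/(N+1)$ at $x-N^{-1}$, preserving the mean and hence the Skorokhod feasibility condition recorded in Section~\ref{sub:The-model}. The gain from this perturbation is dominated by the tie-breaking term of order $m_i m_j$ from beating opponent $j$'s atom outright rather than tying, while the loss from the downward-shifted piece is of order $N^{-1}$; choosing $N$ large yields a strict improvement and contradicts optimality. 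For the single-agent case, suppose only $F^i$ has an atom at $x>0$. Then the function $G^{-i}(y)=\prod_{j\neq i} F^j(y)$ is continuous at $x$, and the same splitting produces a payoff change of
\[
\tfrac{N}{N+1}m_i\bigl(G^{-i}(x+N^{-2})-G^{-i}(x)\bigr)+\tfrac{1}{N+1}m_i\bigl(G^{-i}(x-N^{-1})-G^{-i}(x)\bigr).
\]
If $G^{-i}$ is strictly increasing on some right-neighbourhood of $x$ this is positive for large $N$; if instead $G^{-i}$ is constant on $[x,x+\epsilon]$, then no opponent places mass on this interval, and agent $i$ can simply relocate her atom upward into $(x,x+\epsilon)$ for an unambiguous strict gain. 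Atoms at $0$ contribute zero to the payoff and are compatible with the statement.

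\textbf{Symmetry.} Given continuity of every $F^i$, agent $i$'s expected payoff is $\int G^{-i}(x)\,dF^i(x)$, to be maximised over distributions on $\mathbb{R}^+$ with $\int x\,dF^i \leq x_0$. A Lagrangian argument (anticipating the method announced in the introduction and to be applied in later sections) yields constants $\lambda_i\geq 0$ and $c_i\geq 0$ such that on the support of $F^i$ one has $G^{-i}(x)=\lambda_i x + c_i$, with $G^{-i}(x)\leq \lambda_i x + c_i$ elsewhere. Dividing the identity for agent $i$ by that for agent $j$ gives
\[
\frac{F^j(x)}{F^i(x)}=\frac{\lambda_i x + c_i}{\lambda_j x + c_j}
\]
on the common support. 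After establishing that the supports all coincide with a single interval $[0,b]$, evaluating at $x=b$ together with $F^i(b)=F^j(b)=1$ and the shared mean constraint $\int x\,dF^i=x_0$ forces $\lambda_i=\lambda_j$, $c_i=c_j$ and hence $F^i\equiv F^j$.

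\textbf{Main obstacle.} Two steps require genuine work. First, in the single-agent-atom case, the subcase in which $G^{-i}$ is locally flat to one side of the atom needs a separate rearrangement argument, as sketched above. Second, and more substantially, the symmetry argument depends on the supports of $F^1,\ldots,F^n$ agreeing; this is not self-evident from the Lagrangian conditions alone and must be justified by analysing the behaviour of each $F^i$ near the top of its support, using that an agent whose support ended strictly below some $b_j$ would leave a region in which $G^{-i}$ was strictly increasing unexploited. Once common support is secured, the algebraic reduction to $F^i\equiv F^j$ is routine.
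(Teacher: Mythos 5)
The paper itself does not prove Theorem~\ref{thm:Seel}: it cites Seel \& Strack, records only the two-atom rearrangement hint in the preceding paragraph, and explicitly declines (in the subsequent remark) to present the full argument. So your proposal has to stand on its own, and its central step fails. The two-atom case is fine and matches the paper's hint: the $O(m_im_j)$ gain from converting ties into outright wins dominates the $O(N^{-1})$ loss from the downward piece. But the single-atom case does not follow from the same perturbation. The split of an atom at $x$ into mass $\tfrac{N}{N+1}$ at $x+N^{-2}$ and $\tfrac{1}{N+1}$ at $x-N^{-1}$ is mean-preserving, hence neutral to first order against any $G^{-i}$ that is locally linear: if $G^{-i}$ has local density $g$, the upward gain is $\approx m_i g/(N(N+1))$ and the downward loss is $\approx -m_i g/(N(N+1))$, and these cancel exactly. (Indeed in the actual equilibrium $G^{-i}(x)=x/(nx_0)$ is linear on the support, so the perturbation changes the payoff by precisely zero.) Your claim that the net change ``is positive for large $N$'' when $G^{-i}$ is strictly increasing is therefore false. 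The fallback for the flat case is also backwards: relocating the atom upward into an interval on which $G^{-i}$ is constant yields no increase in winning probability while consuming mean budget, so it is a weakly worse strategy, not ``an unambiguous strict gain.'' The argument that actually works looks \emph{below} the atom: an atom of $F^i$ at $p>0$ makes each opponent's objective $G^{-j}$ jump upward at $p$, so for some $\epsilon>0$ no opponent places mass in $(p-\epsilon,p]$; hence $G^{-i}$ is constant there, and player $i$ could move her atom down to $p-\epsilon$ at no cost in winning probability while strictly relaxing the binding mean constraint, freeing budget to be spent where $G^{-i}$ does increase --- a contradiction. This ``dead zone'' idea is the missing ingredient, and it is not a cosmetic repair of your perturbation.

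Two further gaps. First, atoms at $0$ are not ``compatible with the statement'': the theorem asserts atom-freeness outright, and the paper's Remark~2.1 makes clear that mass at zero can occur only in the asymmetric-start variant, so in the present setting ruling it out is part of what must be proved. Second, in the symmetry argument you correctly identify that the supports must be shown to coincide, but you supply only a heuristic; and the final deduction $\lambda_i=\lambda_j$, $c_i=c_j$ from $\lambda_ib+c_i=\lambda_jb+c_j=1$ plus the common mean needs an explicit monotonicity argument (e.g.\ writing $F^i=P/(\lambda_i\cdot{}+c_i)$ with $P=\prod_kF^k$ and $c_i=1-\lambda_ib$, the mean of $F^i$ is strictly decreasing in $\lambda_i$, which pins $\lambda_i$ down uniquely). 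The algebraic skeleton of the symmetry step is sound, but as written the proof establishes neither atom-freeness nor the common-support premise on which the symmetry step relies.
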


\begin{remark} The fact that the Nash equilibria is atom-free relies
on the fact that the situation is symmetric in the sense that all
agents stop Brownian motions started from a common value $x_{0}$.
If the agents observe processes with different starting points, then
the Nash equilibria may have masses at zero for some agents. In that
case, for a Nash equilibrium, no agent places mass at a positive point,
and at least one agent has an atom-free distribution. We will only
consider the symmetric case. \end{remark}

\begin{remark} In the Seel \& Strack setting, which we call the standard
case, we will focus on proving that there exists a unique symmetric,
atom-free Nash equilibria. It will then follow from the results of
Seel \& Strack~\cite{GamblinginContests} that this is the unique
Nash equilibrium for our problem. Our methods can be extended to show
that every Nash equilibrium has the property that it is symmetric
and atom-free, but we will not present those arguments here. When
we consider the problem with a penalty for using a losing strategy
when a winning strategy exists, we will again prove the existence
of a unique symmetric atom-free Nash equilibrium, but intuition gained
from the standard case indicates that this equilibrium is unique.
\end{remark}

\subsection{Derivation of the equilibrium distribution\label{sub:Derivation-no-regret}}

This section is devoted to a proof of the following result, first
proved in Seel and Strack~\cite{GamblinginContests} using different
methods.

\begin{thm} \label{thm:no-regret} There exists a symmetric, atom-free
Nash equilibrium for the problem for which $X_{\tau^{i}}^{i}$ has
law $F(x)$, where for $x\geq0$ 
\[
F(x)=\min\left\{ \sqrt[n-1]{\frac{x}{nx_{0}}},1\right\} 
\]
\end{thm}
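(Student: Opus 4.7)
The plan is to reformulate the agents' problem as the choice of a target distribution, as explained in the preceding subsection, and then use a Lagrangian sufficiency argument rather than explicitly writing down a value function. Concretely, if agents $j \neq i$ all embed the distribution $G$, then by independence agent $i$'s winning probability, for an atom-free choice $F^i$, equals $\int_0^\infty G(x)^{n-1} dF^i(x)$. The constraints on $F^i$ are that it is a probability distribution on $[0,\infty)$ whose mean does not exceed $x_0$ (so that by Skorokhod it can be embedded in the Brownian motion absorbed at $0$). Thus I would reduce the Nash problem to a constrained linear optimisation over measures, with mean constraint handled by a Lagrange multiplier $\lambda \geq 0$.

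Next I would derive the candidate symmetric equilibrium $F = G$ from the necessary stationarity. On the support of $F$, the function $x \mapsto F(x)^{n-1} - \lambda x$ should be constant, and if the support is a right-connected interval starting at $0$ (as suggested by the original Seel--Strack analysis) the constant must be $0$. Solving $F(x)^{n-1} = \lambda x$, together with the conditions $F(b) = 1$ at the right endpoint $b$ and $\int_0^b x \, dF(x) = x_0$, pins down $\lambda = 1/(nx_0)$, $b = nx_0$, and recovers exactly $F(x) = (x/(nx_0))^{1/(n-1)}$ on $[0, nx_0]$, extended to $1$ beyond.

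The sufficiency step is where the Lagrangian method earns its keep. For the candidate $F$ and multiplier $\lambda$ above, I would verify the pointwise inequality
\[
F(x)^{n-1} - \lambda x \leq 0 \quad \text{for all } x \geq 0,
\]
with equality on $[0, nx_0]$; on $[0, nx_0]$ this is the defining identity, and on $(nx_0, \infty)$ it reduces to $1 - x/(nx_0) \leq 0$. Given this, for any competitor distribution $\tilde F$ on $[0,\infty)$ with mean at most $x_0$,
\[
\int F(x)^{n-1} d\tilde F(x) \leq \lambda \int x\, d\tilde F(x) \leq \lambda x_0 = \int F(x)^{n-1} dF(x),
\]
so $F$ is indeed a best response against itself; this is the Nash property.

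Finally I would close the loop on existence of an admissible stopping rule: $F$ has support in $[0, nx_0]$ and mean exactly $x_0$, so by the Skorokhod embedding theorem (as recalled earlier in the excerpt) there exists a stopping time $\tau^i \leq H_0^i$ with $X^i_{\tau^i} \sim F$. Atom-freeness of $F$ is immediate from its explicit form. The step I expect to require the most care is the pointwise verification of the Lagrangian inequality on the whole of $[0,\infty)$, together with the argument that the support must begin at $0$ and be an interval; the rest is routine computation and a direct appeal to Skorokhod embedding.
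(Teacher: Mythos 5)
Your proposal is correct and follows essentially the same route as the paper: the same reduction to a choice of target distribution, the same Lagrangian sufficiency verification with multiplier $\lambda^*=1/(nx_0)$ and the pointwise inequality $F(x)^{n-1}-\lambda^* x\leq 0$ (with equality on $[0,nx_0]$), and the same appeal to Skorokhod embedding. The only cosmetic difference is that you fold the normalisation constraint into the restriction to probability measures and use the mean constraint in inequality form, rather than carrying a second multiplier $\gamma$ (which the paper sets to zero anyway); your derivation of the candidate matches the paper's Section 6.
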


\begin{proof} Let $\mathcal{A}$ be the set of non-decreasing functions
$f:[0,\infty)\mapsto[0,\infty)$ which are null at zero, and let $\mathcal{A}_{D}(x_{0})$
be the subset of $\mathcal{A}$ corresponding to distribution functions
of random variables with mean $x_{0}$. Then, 
\[
\mathcal{A}_{D}(x_{0})=\left\{ f:[0,\infty)\mapsto[0,\infty)\;\textrm{such that\;}f(0)=0,f\;\mbox{non-decreasing},\;\lim_{x\uparrow\infty}f(x)=1,\int_{0}^{\infty}xf(dx)=x_{0}\right\} .
\]

We seek a symmetric atom-free Nash equilibrium. Since there are no
atoms, we do not need to consider how to break ties and a symmetric
Nash equilibrium is identified with a distribution function $G^{*}\in\mathcal{A}_{D}(x_{0})$
with the property that 
\[
\int_{0}^{\infty}G^{*}(x)^{n-1}G^{*}(dx)\geq\int_{0}^{\infty}G^{*}(x)^{n-1}G(dx)\hspace{10mm}\forall G\in\mathcal{A}_{D}(x_{0}).
\]

Suppose that the other players all choose $F(x)$ as their target
distribution. Then the problem facing the agent is to choose $G$
to solve 
\begin{equation}
\max_{G(x)\in\mathcal{A}}\int_{0}^{\infty}F(x)^{n-1}G(dx)\mbox{ subject to }\int_{0}^{\infty}xG(dx)=x_{0}\mbox{ and }\int_{0}^{\infty}G(dx)=1.\label{eq:Optimizationproblemnoregretssymmetric}
\end{equation}
Introducing multipliers $\lambda$ and $\gamma$ for the two constraints,
the Lagrangian for the optimization problem (\ref{eq:Optimizationproblemnoregretssymmetric})
is then 
\[
\mathcal{L}_{F}(G;\lambda,\gamma)=\int_{0}^{\infty}\left[F(x)^{n-1}-\lambda x-\gamma\right]G(dx)+\lambda x_{0}+\gamma.
\]
Now we state a variant of the Lagrangian sufficiency theorem for our
problem.

\begin{prop}\label{prop:No-regret} If $G^{*}$, $\lambda^{*}$ and
$\gamma^{*}$ exist such that $G^{*}\in\mathcal{A}_{D}(x_{0})$ and
\begin{equation}
\mathcal{L}_{G^{*}}(G^{*};\lambda^{*},\gamma^{*})\geq\mathcal{L}_{G^{*}}(G;\lambda^{*},\gamma^{*})\label{eq:LST}
\end{equation}
for all $G\in\mathcal{A}$, then $G^{*}$ is a symmetric, atom-free
Nash equilibrium. \end{prop}

\begin{proof} If $G\in\mathcal{A}_{D}(x_{0})$ then 
\[
\int_{0}^{\infty}G^{*}(x)^{n-1}G(dx)=\mathcal{L}_{G^{*}}(G;\lambda^{*},\gamma^{*}).
\]
Then, under the hypotheses of the proposition, 
\[
\int_{0}^{\infty}G^{*}(x)^{n-1}G^{*}(dx)=\mathcal{L}_{G^{*}}(G^{*};\lambda^{*},\gamma^{*})\geq\mathcal{L}_{G^{*}}(G;\lambda^{*},\gamma^{*})=\int_{0}^{\infty}G^{*}(x)^{n-1}G(dx).
\]
\end{proof}

Return to the proof of Theorem \ref{thm:no-regret}. On $[0,\infty)$
let $G^{*}(x)=\min\left\{ \sqrt[n-1]{x/(nx_{0})},1\right\} $, $\lambda^{*}=1/(nx_{0})$
and $\gamma^{*}=0$. We verify that for these multipliers (\ref{eq:LST})
holds and that $G^{*}\in\mathcal{A}_{D}(x_{0})$. The latter follows
immediately from the explicit form of $G^{*}$. For the former 
\begin{align*}
\mathcal{L}_{G^{*}}(G;\lambda^{*},\gamma^{*}) & =\int_{0}^{\infty}\left[G^{*}(x)^{n-1}-\lambda^{*}x-\gamma^{*}\right]G(dx)+\lambda^{*}x_{0}+\gamma^{*}\\
 & =\int_{nx_{0}}^{\infty}\left[1-\frac{x}{nx_{0}}\right]G(dx)+\frac{1}{n}\leq\frac{1}{n}=\mathcal{L}_{G^{*}}(G^{*};\lambda^{*},\gamma^{*}).
\end{align*}
Thus there exists a symmetric, atom-free Nash equilibrium of the given
form. \end{proof}

\begin{remark} Seel \& Strack~\cite{GamblinginContests} solve the
problem by writing down a candidate value function for the problem,
and then verifying that the candidate value function is a martingale
under an optimal stopping rule for each agent. \end{remark}

\begin{remark}

Our results can be extended to the case where the processes observed
by the agents are independent copies of some time-homogeneous diffusion
process $Y$ which converges almost surely to the lower bound on its
state space. The idea is to use a change of scale, and, in the setting
of Skorokhod embeddings, can be traced back to Az\'{e}ma and Yor~\cite{AzemaYor79b}.
In addition to Brownian motion (absorbed at zero), canonical examples
include exponential Brownian motion and drifting Brownian motion with
negative drift (with or without absorption at zero). Seel \& Strack~\cite{GamblinginContests}
consider the problem when $Y^{i}$ is a Brownian motion with positive
drift, absorbed at zero, but then they need to impose a further condition
on the model parameters to ensure the finiteness of the candidate
solution.

Let the state space of the time-homogeneous diffusion $Y$ be an interval
$S$ with endpoints $\{l,r\}$ with $-\infty\leq l<Y_{0}=y_{0}<r\leq\infty$.
Suppose that $Y$ is a solution of the stochastic differential equation
$dY=a(Y)dB+b(Y)dt$ and let $s=s(y)$ be the scale function%
\footnote{If $Y$ is exponential Brownian motion, $dY=aYdB+bYdt$ then $s(y)=y^{\kappa}$
with $\kappa=1-2b/a^{2}$. Note that we need parameters such that
$\kappa>0$ to ensure that $Y$ is transient to zero. If $Y$ is downward
drifting Brownian motion, $dY=adB+bdt$ with $b<0$, then $s(y)=e^{-\eta y}$
with $\eta=2b/a^{2}$.%
}. Then $s$ is an increasing solution of $a(y)^{2}s''(y)+2b(y)s'(y)=0$
and $X=s(Y)$ is a continuous local martingale with starting value
$x_{0}=s(y_{0})$, and hence a time-change of Brownian motion. Moreover,
our assumption that $Y$ converges to the lower boundary implies that
$s(l)$ is finite whereas $s(r)=\infty$ and without loss of generality
we may set $s(l)=0$. Then $X=s(Y)$ converges to zero almost surely
(and if zero can be reached in finite time, then zero is absorbing).

Note that $s(\cdot)$ is a continuous strictly increasing function.
Hence the payoff of player $i$ with stopping value $Y_{\tau^{i}}^{i}$
can be expressed as $\frac{1}{k}\mathbf{1}_{\{X_{\tau^{i}}^{i}=\max_{j\in I}X_{\tau^{j}}^{j}\}},$
where $k=\left|\left\{ i\in I:X_{\tau^{i}}^{i}=\max_{j\in I}X_{\tau^{j}}^{j}\right\} \right|$,
and $X_{\tau^{i}}^{i}=s(Y_{\tau^{i}}^{i})$. Then the contest in which
players privately observe $Y^{i}$ is equivalent to the contest in
which players privately observe $X^{i}$, and the choice of the optimal
$\tau^{i}$ is the same for both problems. In particular, if we have
a Nash equilibrium for which $\tau^{i}$ is optimal for the processes
$X^{i}$, then we also have a Nash equilibrium for the processes $Y^{i}$.

The problem is then to find a Nash equilibrium $(G^{i})_{i\in I}$
for $Y_{\tau^{i}}^{i}$ and then verify that there exists $\tau^{i}$
such that $Y_{\tau^{i}}^{i}$ has law $G^{i}$. Under our transformation,
this is the same as finding a Nash equilibrium $(F^{i})_{i\in I}$
for $X_{\tau^{i}}^{i}$ where $F^{i}=G^{i}\circ s^{-1}$, where $s^{-1}$
is the inverse of $s$. To solve the problem for $X$, then either
we argue that the only properties of $X$ that we use are the strong
Markov property, the local martingale property, and the fact that
$X$ converges to zero, so that the theory of this section applies
to the local martingale diffusion $X$, or we argue that since $X$
is a non-negative martingale diffusion, $X$ is a time-change of Brownian
motion and $X_{t}=B_{\Gamma_{t}}$ for some increasing functional
$\Gamma_{t}$. Then if $F$ is any distribution with mean less than
or equal to $x_{0}$, and $\sigma$ is a stopping time such that $B_{\sigma}\sim F$,
then we may take $\tau=\Gamma^{-1}\circ\sigma$ and then $X_{\tau}=B_{\sigma}\sim F$
and $\tau$ is an embedding of $G$ in $Y$.

Note that if $X_{\tau}\sim F$ and $F$ has mean $x_{1}<x_{0}$, then
there exists $(\tilde{F},\tilde{\tau})$ such that $\tilde{F}$ has
mean $x_{0}$, $\tilde{F}\geq F$ and $X_{\tilde{\tau}}\sim\tilde{F}$.
Clearly $\tilde{\tau}$ dominates $\tau$ as a strategy. Hence we
may restrict attention to stopping times $\tau$ such that the distribution
$F$ of $X$ has mean $x_{0}$, and then $(X_{t\wedge\tau})_{t\geq0}$
is a martingale and not just a local martingale. \end{remark}

\section{Contests with regret\label{sec:Regrets}}

Our goal is to solve an extended version of the problem in which agents
are penalised for following losing strategies, if they had an alternative
stopping rule which would have won the contest. The idea is that there
is an omniscient judge who can observe the path of $X^{i}$, and not
just the stopped value, and who penalises the agent for the failure
to use a winning stopping rule if such a strategy exists. This judge
represents the supervisor of the agent, and the agent faces penalties
(such as dismissal) in cases where after the fact she is seen to have
followed a losing strategy, when a winning strategy existed.

In our model there are $n$ contestants, each of whom privately observes
their own process $X^{i}$ and the player with the highest stopping
value wins unit reward. That is, $\forall i\in I$, player $i$ wins
$1$ if she stops at time $\tau^{i}$ such that $X_{\tau^{i}}^{i}>X_{\tau^{j}}^{j}$
$\forall j\neq i$. In addition the player is penalised $K\geq0$
if her stopped value is not highest, and if she had an alternative
strategy which would, with the benefit of hindsight, have allowed
her to win. (The case $K=0$ corresponds to the standard problem.)
Given that the best strategy for agent $i$ is to stop at the maximum
value $M^{i}$ attained by $X^{i}$ this means that player $i$ loses
$K$ if she stops at $\tau^{i}$ such that $X_{\tau^{i}}^{i}<\max_{j\neq i}X_{\tau^{j}}^{j}<M^{i}$.

As before, ties are broken randomly. If there are $k$ players who
stop at the highest value then these players each wins $\frac{1}{k}$.
Further, player $i$ loses $K_{2}$ if she stops at $\tau^{i}$ such
that $X_{\tau^{i}}^{i}<\max_{j\neq i}X_{\tau^{j}}^{j}=M^{i}$, where
$0\leq K_{2}\leq K$. Hence player $i$ who stops at $X_{\tau^{i}}^{i}$
with maximum value $M^{i}$ has payoff 
\[
\frac{1}{k}\mathbf{1}_{\{X_{\tau^{i}}^{i}=\max_{j\in I}X_{\tau^{j}}^{j}\}}-K\mathbf{1}_{\{X_{\tau^{i}}^{i}<\max_{j\neq i}X_{\tau^{j}}^{j}<M^{i}\}}-K_{2}\mathbf{1}_{\{X_{\tau^{i}}^{i}<\max_{j\neq i}X_{\tau^{j}}^{j}=M^{i}\}},
\]
where $k=\left|\left\{ i\in I:X_{\tau^{i}}^{i}=\max_{j\in I}X_{\tau^{j}}^{j}\right\} \right|$.

Our objective is to find a Nash equilibria which is represented by
a family of stopping rules $(\tau^{i})$. Since the values $(X_{\tau^{i}}^{i},M^{i})$
are a sufficient statistic for the problem, the Nash equilibria can
be characterised by the law $\nu^{i}$ of $(X_{\tau^{i}}^{i},M^{i})$.
Then in equilibrium, the agent can use any stopping rule for which
$(X_{\tau^{i}}^{i},M^{i})$ has law $\nu^{i}$. We write $F^{i}$
for the marginal of $\nu$ which corresponds to the law of $X_{\tau^{i}}^{i}$.

In the standard case, every Nash equilibrium is symmetric and atom-free.
In our generalised setting we will limit our search to symmetric atom-free
Nash equilibria, and we will show that there exists a unique such
equilibrium. Motivated by the situation in the standard case we conjecture
that every Nash equilibria is symmetric and atom-free and therefore
that we have found the unique equilibrium.

\begin{remark}Since there are no atoms, the probability of a tie
is zero. Thus neither the method of breaking ties nor the value of
$K_{2}$ will affect our results. \end{remark}

Suppose that the other players all choose $F(x)$ as their target
distribution of $X_{\tau}$. Then agent $i$ aims to choose a feasible
measure $\nu(x,y)$ for $(X_{\tau^{i}}^{i},M^{i})$ to maximise 
\begin{align}
\mathbb{E}\left[F(X_{\tau^{i}}^{i})^{n-1}\right] & -K\mathbb{E}\left[F(M^{i})^{n-1}-F(X_{\tau^{i}}^{i})^{n-1}\right]\nonumber \\
 & =(1+K)\mathbb{E}\left[F(X_{\tau^{i}}^{i})^{n-1}\right]-K\mathbb{E}\left[F(M^{i})^{n-1}\right],\nonumber \\
 & =\int_{0}^{\infty}\int_{0}^{\infty}\left[(1+K)F(x)^{n-1}-KF(y)^{n-1}\right]\nu(dx,dy),\label{eq:expected payoff}
\end{align}
which is her expected payoff when she stops at $\tau^{i}$.

So far we have been imprecise about the definition of $M^{i}$. The
quantity $M^{i}$ represents the maximum the agent could have achieved.
Depending on the interpretation, this could be the maximum over the
entire path $M^{i}=\max\{X_{t}^{i};0\leq t\leq H_{0}^{i}\}$, or it
could be that only that part of the path before the agent's chosen
stopping time is considered, $M^{i}=\max\{X_{t}^{i};0\leq t\leq\tau^{i}\}$,
or only that part of the path after the agent's chosen stopping time,
$M^{i}=\max\{X_{t}^{i};\tau^{i}\leq t\leq H_{0}^{i}\}$, where $H_{0}^{i}=\inf\{t\in\mathbb{R}^{+}:X_{t}^{i}=0\}$.
These different interpretations will lead to different Nash equilibria.
We consider the three cases separately in the next three sections.

\section{Contest with regret over future failure to stop\label{sec:Future-regrets}}

In this section we consider the contest in which the agent is penalised
for stopping too soon. We consider the maximum value $M^{i}$ to be
defined by 
\[
M^{i}:=M_{[\tau^{i},H_{0}^{i}]}^{i}=\sup_{\tau^{i}\leq t\leq H_{0}^{i}}X_{t}^{i}.
\]

\begin{thm}\label{thm:future-regrets}There exists a symmetric, atom-free
Nash equilibrium for the problem for which $X_{\tau^{i}}^{i}$ has
law $F(x)$, where for $x\geq0$ 
\[
F(x)=\min\left\{ \sqrt[N-1]{\frac{x}{Nx_{0}}},1\right\} 
\]
with $N=n+K(n-1)$. \end{thm}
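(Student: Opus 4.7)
The plan is to adapt the Lagrangian sufficiency approach of Section~\ref{sec:Without-regrets} (see Proposition~\ref{prop:No-regret}) to the regret setting. The key new ingredient is to exploit the strong Markov property of $X^i$ at $\tau^i$ to eliminate the joint variable $M^i$ in favour of the marginal $G$ of $X_{\tau^i}^i$, thereby reducing the optimisation over joint measures $\nu$ to one over distributions $G\in\mathcal{A}_D(x_0)$ of exactly the same form as in the proof of Theorem~\ref{thm:no-regret}.

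Specifically, suppose all opponents play $F$ and agent $i$ adopts a stopping rule with $X_{\tau^i}^i\sim G$. Conditional on $X_{\tau^i}^i=x>0$, the post-$\tau^i$ process is a Brownian motion started at $x$ and absorbed at zero, so its future supremum satisfies $\mathbb{P}(M^i>m\mid X_{\tau^i}^i=x)=x/m$ for $m\geq x$. Substituting this conditional law into (\ref{eq:expected payoff}) and applying Fubini rewrites the expected payoff as $\int_0^\infty \psi_F(x)\,G(dx)$, where
\[
\psi_F(x):=(1+K)F(x)^{n-1}-K\int_x^\infty F(m)^{n-1}\frac{x}{m^2}\,dm.
\]
Hence the payoff depends on $\nu$ only through its $X_{\tau^i}^i$-marginal $G$, which is precisely what makes a Lagrangian reduction possible.

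I would then introduce multipliers $\lambda,\gamma$ for the constraints $\int xG(dx)=x_0$ and $\int G(dx)=1$, form the Lagrangian
\[
\mathcal{L}_F(G;\lambda,\gamma)=\int_0^\infty\bigl[\psi_F(x)-\lambda x-\gamma\bigr]G(dx)+\lambda x_0+\gamma,
\]
and state the obvious analogue of Proposition~\ref{prop:No-regret}: if $G^*\in\mathcal{A}_D(x_0)$ together with multipliers $\lambda^*,\gamma^*$ satisfies $\mathcal{L}_{G^*}(G^*;\lambda^*,\gamma^*)\geq\mathcal{L}_{G^*}(G;\lambda^*,\gamma^*)$ for every $G\in\mathcal{A}$, then $G^*$ is a symmetric atom-free Nash equilibrium. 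Its proof is word-for-word the same as that of Proposition~\ref{prop:No-regret}.

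It then remains to verify the sufficient condition for the stated $F$ with $\lambda^*=1/(Nx_0)$ and $\gamma^*=0$; this is the main computational step. Using the identity $(n-1)/(N-1)=1/(1+K)$ built into the choice $N=n+K(n-1)$, a direct evaluation of the integral in $\psi_F$ produces the clean cancellation $\psi_F(x)=x/(Nx_0)$ for $x\in[0,Nx_0]$ and $\psi_F(x)=1$ for $x\geq Nx_0$, so that $\psi_F(x)-\lambda^* x-\gamma^*\leq 0$ on $[0,\infty)$ with equality precisely on the support $[0,Nx_0]$ of $F$. This delivers (\ref{eq:LST}) for the new Lagrangian and completes the argument. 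The essential obstacle is spotting that this particular exponent is what causes $(1+K)F(x)^{n-1}$ and $K\int F(m)^{n-1}(x/m^2)\,dm$ to collapse to a linear function of $x$; once this algebraic miracle is in hand, the proof mirrors the no-regret case and reveals the pleasing interpretation that a future-regret penalty $K$ is equivalent to inflating the number of competitors from $n$ to $N=n+K(n-1)$.
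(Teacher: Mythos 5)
Your proposal is correct and follows essentially the same route as the paper: the same reduction of the joint law of $(X_{\tau^i}^i,M^i)$ to the marginal $G$ via the strong Markov property (your $\mathbb{P}(M^i>m\mid X_{\tau^i}^i=x)=x/m$ is the paper's $\mathbb{P}(M^i\leq y\mid X_{\tau^i}^i=z)=(y-z)/y$), the same Lagrangian with the same multipliers $\lambda^*=1/(Nx_0)$, $\gamma^*=0$, and the same verification that the integrand vanishes on $[0,Nx_0]$ and is nonpositive above it. The cancellation you describe, resting on $(n-1)/(N-1)=1/(1+K)$, checks out, so the argument is complete.
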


\begin{remark} The agent follows exactly the same optimal strategy
as an agent in a different setup, where there is no penalty, but the
total number of contestants is increased to $N=n+K(n-1)$. \end{remark}

\begin{proof}Denote by $\nu$ the joint distribution of $X_{\tau^{i}}^{i}$
and $M_{[\tau^{i},H_{0}^{i}]}^{i}$ and denote by $G(x)$ the marginal
distribution of $X_{\tau^{i}}^{i}$. Then using the strong Markov
property and the martingale property of $X$, 
\begin{align}
\nu([0,x]\times[0,y]) & =\mathbb{P}(X_{\tau^{i}}^{i}\leq x,M_{[\tau^{i},H_{0}^{i}]}^{i}\leq y)=\int_{0}^{x}\mathbb{P}(M_{[\tau^{i},H_{0}^{i}]}^{i}\leq y|X_{\tau^{i}}^{i}=z)G(dz)\nonumber \\
 & =\int_{0}^{x}\mathbb{P}(H_{0}^{i}<H_{y}^{i}|X_{0}^{i}=z)G(dz)=\int_{0}^{x}\frac{y-z}{y}G(dz),\label{eq:joint distr future failure}
\end{align}
where $H_{y}^{i}=\inf\{t\in\mathbb{R}^{+}:X_{t}^{i}=y\}$.

Suppose that the other players all choose $F(x)$ as their target
distribution of $X_{\tau}$. Substituting (\ref{eq:joint distr future failure})
into (\ref{eq:expected payoff}), the expected payoff of player $i$
becomes 
\begin{align*}
\int_{0}^{\infty}\int_{0}^{\infty} & \left[(1+K)F(x)^{n-1}-KF(y)^{n-1}\right]\nu(dx,dy)\\
= & \int_{0}^{\infty}(1+K)F(x)^{n-1}G(dx)-\int_{0}^{\infty}\int_{x}^{\infty}KF(y)^{n-1}\frac{x}{y^{2}}dyG(dx)\\
= & \int_{0}^{\infty}\left[(1+K)F(x)^{n-1}-Kx\int_{x}^{\infty}\frac{F(y)^{n-1}}{y^{2}}dy\right]G(dx).
\end{align*}

Given other players' choices, player $i$ would like to choose $G\in\mathcal{A}$
to solve 
\begin{equation}
\max_{G\in\mathcal{A}}\int_{0}^{\infty}\left[(1+K)F(x)^{n-1}-Kx\int_{x}^{\infty}\frac{F(y)^{n-1}}{y^{2}}dy\right]G(dx)\mbox{ subject to }\int_{0}^{\infty}xG(dx)=x_{0}\mbox{ and }\int_{0}^{\infty}G(dx)=1.\label{eq:Optimization problem future failure symmetric}
\end{equation}
Introducing multipliers $\lambda$ and $\gamma$ for the two constraints,
the Lagrangian for the optimization problem (\ref{eq:Optimization problem future failure symmetric})
is then 
\[
\mathcal{L}_{F}(G;\lambda,\gamma)=\int_{0}^{\infty}\left[(1+K)F(x)^{n-1}-Kx\int_{x}^{\infty}\frac{F(y)^{n-1}}{y^{2}}dy-\lambda x-\gamma\right]G(dx)+\lambda x_{0}+\gamma.
\]
On $[0,\infty)$ let $G^{*}(x)=\min\left\{ 1,\sqrt[N-1]{x/(Nx_{0})}\right\} $,
$\lambda^{*}=1/(Nx_{0})$ and $\gamma^{*}=0$, where $N=n+K(n-1)$.
It is easy to check that $G^{*}\in\mathcal{A}_{D}(x_{0})$. Moreover,
\begin{align*}
\mathcal{L}_{G^{*}}(G;\lambda^{*},\gamma^{*}) & =\int_{0}^{\infty}\left[(1+K)G^{*}(x)^{n-1}-Kx\int_{x}^{\infty}\frac{G^{*}(y)^{n-1}}{y^{2}}dy-\lambda^{*}x-\gamma^{*}\right]G(dx)+\lambda^{*}x_{0}+\gamma^{*}\\
 & =\int_{Nx_{0}}^{\infty}\left[1-\frac{x}{Nx_{0}}\right]G(dx)+\frac{1}{N}\leq\frac{1}{N}=\mathcal{L}_{G^{*}}(G^{*};\lambda^{*},\gamma^{*}).
\end{align*}
Hence, by the Lagrangian sufficiency theorem (Proposition \ref{prop:No-regret})
$G^{*}$ is a symmetric, atom-free Nash equilibrium. \end{proof}

\begin{remark} In this version of the problem, the stopping decision
depends on the current value of $X$ alone, and not on the current
maximum. This is because the penalty depends on the future maximum,
which conditional on the current value of the process is independent
of the past maximum. \end{remark}

\section{Contest with regret over past failure to stop\label{sec:Past-regrets}}

This section discusses the contest with regret over past failure to
stop, that is player is penalised when she could have won if she had
stopped sooner. This case is relevant when the omniscient being can
only observe the realisation of $X^{i}$ up to the stopping time chosen
by the agent. In this case the maximum value $M^{i}$ is defined by
\[
M^{i}:=M_{\tau^{i}}^{i}=\sup_{0\leq t\leq\tau^{i}}X_{t}^{i}.
\]

Consider the problem facing a single agent under the assumption that
the strategies of the competing agents are fixed. Temporarily we drop
the subscript denoting the label of the agent. Recall that the payoff
to the agent is $(1+K)F(X_{\tau})^{n-1}-KF(M_{\tau})^{n-1}$. For
a continuous martingale Kertz and R\"{o}sler~\cite{KertzRosler} characterise
all possible joint laws of $(X_{\tau},M_{\tau})$ and hence the problem
is reduced to a search over measures with these characteristics. However,
an alternative is to split the optimisation problem into a two-stage
procedure: first for any feasible distribution of $X_{\tau}$ (a non-negative
random variable with mean $x_{0}$) find the joint law of $(X_{\tau},M_{\tau})$
for which $M_{\tau}$ is as small as possible in distribution ---
such a joint law exists by results of Perkins~\cite{Perkinssolution}
--- and then minimise a modified objective function over feasible
laws of $X_{\tau}$.

For a given law of $X_{\tau}$ the joint law of $(X_{\tau},M_{\tau})$
for which $M_{\tau}$ is minimised is such that mass is placed only
on the set $A=\{(x,x);x\geq x_{0}\}\cup\{(x,\Phi(x));x<x_{0}\}$ where
$\Phi:(0,x_{0})\mapsto(x_{0},\infty)$ is a decreasing function (and
if $X_{\tau}$ is atom-free, a strictly decreasing function). Let
$\phi$ be inverse to $\Phi$. Then, if $G$ denotes the marginal
law of $X$, we can conclude from Doob's submartingale inequality%
\footnote{By the final remark of Section \ref{sub:Derivation-no-regret}, we
may assume $(X_{t\wedge\tau})_{t\geq0}$ is a martingale, and then
we have, for $m\geq x_{0}$, $m\mathbb{P}(M_{\tau}\geq m)=\mathbb{E}[X_{\tau};M_{\tau}\geq m]$.
Hence $0=\mathbb{E}[m-X_{\tau};M_{\tau}\geq m]$. %
}, in conjunction with the set identity $(M_{\tau}\geq m)=(X_{\tau}\geq m)\cup(X_{\tau}\leq\phi(m))$,
that for $m\geq x_{0}$ 
\begin{equation}
0=\mathbb{E}[m-X_{\tau};X_{\tau}\geq m]+\mathbb{E}[m-X_{\tau};X_{\tau}\leq\phi(m)]=\int_{m}^{\infty}(m-y)G(dy)+\int_{0}^{\phi(m)}(m-y)G(dy)\label{eqn:PerkinsHPdefnphiI}
\end{equation}
which, since $X_{\tau}$ has mean $x_{0}$, is equivalent to 
\begin{equation}
0=m-x_{0}+(m-\phi(m))G(\phi(m))-\int_{\phi(m)}^{m}G(y)dy.\label{eq:Gphi}
\end{equation}
In differential form, assuming $G$ and $\phi$ are differentiable,
this becomes 
\begin{equation}
0=\phi'(m)(m-\phi(m))G'(\phi(m))+1+G(\phi(m))-G(m).\label{eqn:PerkinsHPdefnphiD}
\end{equation}
It follows from the results of Perkins~\cite{Perkinssolution} and
Hobson and Pedersen~\cite{HobsonPedersen}, that if $G$ is the law
of an atom-free non-negative random variable, then there exists a
decreasing function $\phi$ solving (\ref{eqn:PerkinsHPdefnphiI}).
Further, if $\xi$ is a random variable such that for $s\geq x_{0}$
\[
\mathbb{P}(\xi\geq s)=\exp\left(-\int_{(x_{0},s)}\frac{G(du)}{1-G(u)+G(\phi(u))}\right)
\]
and if $\tau=\tau_{\xi}\wedge\tau_{\phi}$ where $\tau_{\xi}=\inf\{t>0|M_{t}\geq\xi\}$
and $\tau_{\phi}=\inf\{t>0|X_{t}\leq\phi(M_{t})\}$, then $X_{\tau}$
has law $G$ and $(X_{\tau},M_{\tau})$ places no mass off $A$. Moreover,
amongst the class of joint laws for $(X_{\tau},M_{\tau})$ such that
$X_{\tau}$ has law $G$, $M_{\tau}$ is as small as possible in distribution.

\begin{thm}\label{thm:Past-regret} Suppose there exists a finite
real number $r>x_{0}$, a once differentiable strictly decreasing
function $\phi:[x_{0},r]\mapsto[0,x_{0}]$, a thrice differentiable
strictly increasing and strictly convex function $\psi:[x_{0},r]\mapsto[0,1]$
and a once differentiable strictly decreasing function $\theta:[x_{0},r]\mapsto[0,1]$
such that $\phi$, $\psi$ and $\theta$ solve the following system
of equations \begin{numcases}{(\dagger)\quad}
\phi'(y)\psi'(y)=(1+K)\theta'(y),\label{eq:thm-eq1}\\
K\psi'(y)=(y-\phi(y))\psi''(y), \label{eq:thm-eq2}\\
\frac{y-\phi(y)}{n-1}\theta'(y)=\left(\psi(y)^{\frac{1}{n-1}}-1\right)\theta(y)^{\frac{n-2}{n-1}}-\theta(y),\label{eq:thm-eq3}
\end{numcases}and satisfy that $\phi(x_{0})=x_{0}$, $\psi(r)=1$, $\psi'(r-)=\frac{K+1}{r}$,
$\psi''(r-)=\frac{K(K+1)}{r^{2}}$ and $\theta(x_{0})=\psi(x_{0})$.

i) Then 
\begin{equation}
\theta(y)=\psi(y)-\frac{K}{K+1}\frac{\psi'(y)^{2}}{\psi''(y)}.\label{eq:thm-theta}
\end{equation}

ii) Moreover, there exists a symmetric, atom-free Nash equilibrium
for the problem for which $X_{\tau^{i}}^{i}$ and $M_{\tau^{i}}^{i}$
have joint law $\nu^{*}$ that is determined by the marginal distribution
$G^{*}$ of $X_{\tau^{i}}^{i}$, given by $G^{*}(x)=0$ for $x\leq0$,
$G^{*}(x)=1$ for $x\geq r$ and 
\begin{equation}
G^{*}(x)=\begin{cases}
\theta(\phi^{-1}(x))^{\frac{1}{n-1}} & ,\textrm{ if }0<x<x_{0},\\
\psi(x)^{\frac{1}{n-1}} & ,\textrm{ if }x_{0}\leq x<r,
\end{cases}\label{eq:thm-F-past-failure}
\end{equation}
otherwise, and the conditional distribution of $M_{\tau^{i}}^{i}$
given $X_{\tau^{i}}^{i}$ such that 
\[
M_{\tau^{i}}^{i}=\begin{cases}
X_{\tau^{i}}^{i} & ,\textrm{ if }X_{\tau^{i}}^{i}\geq x_{0},\\
\phi^{-1}(X_{\tau^{i}}^{i}) & ,\textrm{ if }0\leq X_{\tau^{i}}^{i}<x_{0}.
\end{cases}
\]
\end{thm}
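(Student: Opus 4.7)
I would establish (\ref{eq:thm-theta}) by direct verification. Setting $y = x_0$ in (\ref{eq:thm-eq2}) together with $\phi(x_0) = x_0$ forces $\psi'(x_0) = 0$ (using $\psi''(x_0) > 0$ from strict convexity), so the candidate right-hand side of (\ref{eq:thm-theta}) at $y = x_0$ equals $\psi(x_0)$, matching the boundary condition $\theta(x_0) = \psi(x_0)$. Differentiating (\ref{eq:thm-theta}) and using (\ref{eq:thm-eq2}) together with its derivative (which expresses $\phi'(y)$ as $1 - K + K\psi'(y)\psi'''(y)/\psi''(y)^2$) shows after a short algebraic manipulation that $\theta'(y) = \phi'(y)\psi'(y)/(K+1)$, which is (\ref{eq:thm-eq1}). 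The identity on $[x_0, r]$ then follows from the fundamental theorem of calculus. Note that (\ref{eq:thm-eq3}) is not used here; it is the remaining ODE that ultimately determines $\psi$.

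\textbf{Part (ii).} I would extend the Lagrangian sufficiency argument of Proposition \ref{prop:No-regret} to joint laws $\nu$ of $(X_{\tau^i}^i, M_{\tau^i}^i)$. The preliminary verifications are: (a) $G^* \in \mathcal{A}_D(x_0)$, using continuity at $x_0$ from $\theta(x_0) = \psi(x_0)$, normalisation from $\psi(r) = 1$, monotonicity from the monotonicity of $\psi$ and $\theta$, and the mean-$x_0$ condition from an integration by parts exploiting $\psi'(r-) = (K+1)/r$ together with (\ref{eq:Gphi}); and (b) $\nu^*$ is a feasible joint law, which follows from recognising its support as the Perkins--Hobson-Pedersen set $\{(x, x): x \geq x_0\} \cup \{(x, \phi^{-1}(x)): x < x_0\}$ and verifying (\ref{eq:Gphi}) at $G = G^*$. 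Now suppose the other players use $F = G^*$. Because $F^{n-1}$ is non-decreasing in $y$ and enters the payoff (\ref{eq:expected payoff}) with coefficient $-K \leq 0$, for any marginal $G$ chosen by the agent the optimal joint law minimises $M_{\tau^i}$ in stochastic order; this is the Perkins--Hobson-Pedersen law $\nu_G$, supported on a set of the same form, with decreasing function $\Phi_G$ built from $G$ via (\ref{eqn:PerkinsHPdefnphiI}). After this two-stage reduction, the agent's problem becomes the maximisation, over $G \in \mathcal{A}_D(x_0)$, of
\begin{equation*}
\Pi(G) = \int_{x_0}^{\infty} F(x)^{n-1}\, G(dx) + \int_0^{x_0}\!\left[(1+K) F(x)^{n-1} - K F(\Phi_G(x))^{n-1}\right] G(dx),
\end{equation*}
with the candidate being $G^*$ and $\Phi_{G^*} = \phi^{-1}$.

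I would then introduce Lagrange multipliers $\lambda^*, \gamma^*$ for the constraints $\int x\, G(dx) = x_0$ and $\int G(dx) = 1$, together with a continuous family of multipliers $\mu^*(m)$ (for $m \in [x_0, r]$) to enforce (\ref{eqn:PerkinsHPdefnphiI}) at each level, and treat $(G, \Phi)$ as independent variables in the resulting Lagrangian. The ODE system (\ref{eq:thm-eq1})--(\ref{eq:thm-eq3}) together with the boundary data $\psi'(r-) = (K+1)/r$ and $\psi''(r-) = K(K+1)/r^2$ are precisely the first-order (stationarity and transversality) conditions that make $(G^*, \phi^{-1})$ a critical point; reading these conditions backwards determines the candidate multipliers, after which the sufficiency inequality $\mathcal{L}(G, \Phi; \lambda^*, \gamma^*, \mu^*) \leq \mathcal{L}(G^*, \phi^{-1}; \lambda^*, \gamma^*, \mu^*)$ must be checked for all admissible $(G, \Phi)$. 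The main obstacle, absent from the no-regret case, is precisely the non-linear coupling $\Phi = \Phi_G$ built into the Perkins--Hobson-Pedersen construction: $G$ enters $\Pi(G)$ both as the measure of integration and as the argument inside $F \circ \Phi_G$. The delicate step will be to verify the Lagrangian inequality globally after this decoupling; an alternative is to find a monotonicity/coupling argument between the Perkins--Hobson-Pedersen constructions for $G$ and $G^*$ that yields a pointwise linear upper bound amenable to the standard Lagrangian sufficiency theorem.
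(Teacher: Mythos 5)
Part (i) of your argument reaches the right identity but via a shaky endpoint evaluation: you claim that (\ref{eq:thm-eq2}) at $y=x_0$ ``forces $\psi'(x_0)=0$'', which presumes $\psi''(x_0)$ is finite. In the equilibrium actually constructed (Lemma~\ref{lem:Past-regret-simplified}) one has $\psi'(x_0+)>0$ and $\psi''(x_0+)=+\infty$, so $(y-\phi(y))\psi''(y)$ is an indeterminate form at $x_0$ and your deduction fails there; indeed $\psi'(x_0+)=\lambda^*$ must be positive, being the multiplier on the mean constraint. Your conclusion that the right-hand side of (\ref{eq:thm-theta}) equals $\psi(x_0)$ at the left endpoint survives because $\psi'(y)^2/\psi''(y)\to 0$ in either case, and your computation that the derivative of the right-hand side of (\ref{eq:thm-theta}) reproduces (\ref{eq:thm-eq1}) is correct; still, the cleaner route --- the one the paper takes --- is to integrate (\ref{eq:thm-eq1}) from $x_0$ to $y$, integrate by parts, and eliminate $\int\phi\,\psi''$ using (\ref{eq:thm-eq2}), which yields (\ref{eq:10}) and then (\ref{eq:thm-theta}) without evaluating second derivatives at the endpoint.

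The substantive gap is in part (ii). You correctly identify the feasibility checks and the need for a continuum of multipliers for the Doob constraints, but your plan then performs the Perkins/Hobson--Pedersen reduction \emph{inside} the optimisation, leaving a functional $\Pi(G)$ in which $G$ enters nonlinearly through $\Phi_G$; you yourself flag the resulting ``delicate step'' and offer no argument for it. The paper's proof avoids this entirely: it keeps the joint law $\nu$ as the primal variable over the cone $\mathcal{E}(x_0)$, imposes $\int\!\!\int_{y\ge z}(x-z)\nu(dx,dy)=0$ for all $z\ge x_0$ as \emph{linear} constraints with multipliers $\eta^*(z)=\psi''(z)$ (together with $\lambda^*=\psi'(x_0+)$ and $\gamma^*=\psi(x_0)-x_0\psi'(x_0+)$), so that the Lagrangian is linear in $\nu$ and sufficiency reduces to the pointwise inequality $L^*(x,y)\le 0$ on $\{0\le x\le y,\ y\ge x_0\}$ with equality on the support of $\nu^*$. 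That inequality is then verified explicitly: extending $\psi$ to $[0,x_0)$ by $\psi(x)=\theta(\phi^{-1}(x))$ one computes $L^*(x,y)=(1+K)(\psi(x)-\psi(y))+(y-x)\psi'(y)$ for $x_0\le y\le r$, checks $L^*(y,y)=L^*(\phi(y),y)=0$ and $\frac{\partial L^*}{\partial x}(\phi(y),y)=0$, and uses the concave-then-convex structure of the extended $\psi$ to control the sign of $\frac{\partial L^*}{\partial x}$; a separate computation with $\tilde\psi(x)=\psi(x)-x/r$ handles $y>r$. None of this verification appears in your proposal, and it is the heart of the proof; without it (or a worked-out substitute for your proposed ``monotonicity/coupling argument'') the claim is not established. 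The two-stage reduction serves in the paper only to motivate the support of $\nu^*$ and the boundary conditions, not as part of the verification.
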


\begin{proof} The conditions in the theorem imply some properties
of function $\phi$: let $y=r-$ in (\ref{eq:thm-eq2}) then since
$\psi'(r-)=\frac{K+1}{r}$ and $\psi''(r-)=\frac{K(K+1)}{r^{2}}$
we have $\phi(r)=0$; since (\ref{eq:thm-eq2}) holds and by the positivity
of $\psi'$ and $\psi''$ we have $\phi(y)<y$ on $(x_{0},r)$.

i) Integrating (\ref{eq:thm-eq1}) with respect to $y$, 
\begin{equation}
(K+1)\theta(z)-(K+1)\theta(x_{0})=\int_{x_{0}}^{z}\phi'(y)\psi'(y)dy=\phi(z)\psi'(z)-\phi(x_{0})\psi'(x_{0}+)-\int_{x_{0}}^{z}\phi(y)\psi''(y)dy.\label{eq:7}
\end{equation}
Rearranging (\ref{eq:thm-eq2}) and integrating, 
\begin{align*}
\int_{x_{0}}^{z}\phi(y)\psi''(y)dy & =\int_{x_{0}}^{z}y\psi''(y)dy-\int_{x_{0}}^{z}K\psi'(y)dy=z\psi'(z)-x_{0}\psi'(x_{0}+)-(1+K)\psi(z)+(1+K)\psi(x_{0}).
\end{align*}
Combining this equation with (\ref{eq:7}) we find 
\begin{equation}
\theta(y)=\frac{1}{K+1}(\phi(y)-y)\psi'(y)+\psi(y).\label{eq:10}
\end{equation}
Then substituting (\ref{eq:thm-eq2}) into (\ref{eq:10}), (\ref{eq:thm-theta})
follows.

ii) Let $\mathcal{E}(x_{0})$ be the set of measures $\nu(dx,dy)$
on $[0,\infty)\times[0,\infty)$ such that $\nu(dx,dy)$ has no mass
on $\{(x,y):y<x\textrm{ or }y<x_{0}\}$, and let $\mathcal{E}_{D}(x_{0})$
be the subset of $\mathcal{E}(x_{0})$ corresponding to probability
measures of a pair of random variables $X\leq Y$ such that $X$ is
a continuous random variable with mean $x_{0}$ and $\mathbb{E}[X-z;Y\geq z]=0$
for all $z\geq x_{0}$. Note that the last equation comes from the
Doob's submartingale inequality, applied in the continuous martingale
case.

Suppose that the other players all choose $F(x)$ as the target distribution
of $X_{\tau}$. Then the aim of player $i$ is to choose $\nu\in\mathcal{E}(x_{0})$
to solve 
\begin{equation}
\max_{\nu\in\mathcal{E}(x_{0})}\left\{ \int_{0}^{\infty}\int_{0}^{\infty}\left[(1+K)F(x)^{n-1}-KF(y)^{n-1}\right]\nu(dx,dy)\right\} \label{eq:Optimization problem past failure symmetric}
\end{equation}
subject to $\int_{0}^{\infty}\int_{0}^{\infty}x\nu(dx,dy)=x_{0}$,
$\int_{0}^{\infty}\int_{0}^{\infty}\nu(dx,dy)=1$ and $\int_{x=0}^{\infty}\int_{y=z}^{\infty}(x-z)\nu(dx,dy)=0$
$\forall z\geq x_{0}$.

Introduce multipliers $\lambda$ and $\gamma$ for the first two constraints,
and for each $z\geq x_{0}$ introduce a Lagrange multiplier $\eta\left(z\right)$
for the last constraint: the constraint becomes $\int_{0}^{\infty}\int_{0}^{\infty}\int_{z=x_{0}}^{y}\left\{ \eta(z)(x-z)dz\right\} \nu(dx,dy)=0.$
Then the Lagrangian for the optimization problem (\ref{eq:Optimization problem past failure symmetric})
is 
\begin{align}
\mathcal{L}_{F}(\nu;\lambda,\gamma,\eta)\label{eqn:LagrangianP}\\
=\int_{0}^{\infty} & \int_{0}^{\infty}\left[(1+K)F(x)^{n-1}-KF(y)^{n-1}-\lambda x-\gamma-\int_{x_{0}}^{y}\eta(z)(x-z)dz\right]\nu(dx,dy)+\lambda x_{0}+\gamma.\nonumber 
\end{align}
By a simple extension of the Lagrangian sufficiency theorem given
in Proposition \ref{prop:No-regret} we have that

\begin{prop}\label{prop:Past-regret} If $\nu^{*}$, $\lambda^{*}$,
$\gamma^{*}$ and $\eta^{*}$ exist such that $\nu^{*}\in\mathcal{E}_{D}(x_{0})$
and 
\begin{equation}
\mathcal{L}_{G^{*}}(\nu^{*};\lambda^{*},\gamma^{*},\eta^{*})\geq\mathcal{L}_{G^{*}}(\nu;\lambda^{*},\gamma^{*},\eta^{*})\label{eq:LST-past}
\end{equation}
for all $\nu\in\mathcal{E}(x_{0})$, where $G^{*}(x)=\nu^{*}([0,x]\times(0,\infty))$,
then $\nu^{*}$ is a symmetric, atom-free Nash equilibrium. \end{prop}

On $[0,\infty)\times[0,\infty)$ let $\nu^{*}$ be the joint law given
in the theorem and $G^{*}$ be its marginal distribution with respect
to $X_{\tau}$. In particular, $G^{*}$ is given by (\ref{eq:thm-F-past-failure}).
Let $\lambda^{*}=\psi'(x_{0}+)$, $\gamma^{*}=\psi(x_{0})-x_{0}\psi'(x_{0}+)$,
$\eta^{*}(y)=\psi''(y)$ for $x_{0}<y<r$, and $\eta^{*}(y)=0$ for
$y\geq r$. We will show that for these multipliers (\ref{eq:LST-past})
holds and that $\nu^{*}\in\mathcal{E}_{D}(x_{0})$.

To prove $\nu^{*}\in\mathcal{E}_{D}(x_{0})$, we need to show $G^{*}(0)=0$,
$\lim_{y\uparrow\infty}G^{*}(y)=1$, $G^{*}(y)$ is continuous and
non-decreasing, $\int_{0}^{\infty}uG^{*}(du)=x_{0}$ and $\int_{x=0}^{\infty}\int_{y=z}^{\infty}(x-z)\nu^{*}(dx,dy)=0$
for all $z\geq x_{0}$.

Letting $y=r-$ in (\ref{eq:thm-theta}), we find $\theta(r)=0$.
Then since $\phi(r)=0$, we have $G^{*}(0)=\theta(\phi^{-1}(0))^{1/(n-1)}=\theta(r)^{1/(n-1)}=0$.
Moreover, $\lim_{y\uparrow\infty}G^{*}(y)=1$ follows from the finiteness
of $r$. We have $\phi^{-1}(x_{0})=x_{0}$ and hence $G^{*}$ is continuous
at $x_{0}$. Since both $\phi$ and $\theta$ are decreasing and continuous
on $[x_{0},r]$, $G^{*}\left(x\right)=\theta(\phi^{-1}(x))^{1/(n-1)}$
is increasing for $x\in[0,x_{0}]$. Then, since $\psi(y)$ is increasing
on $[x_{0},r]$, $G^{*}$ is continuous and non-decreasing on the
whole interval of $[0,r]$. Note that this implies $r=\sup\{x\geq0:G^{*}(x)<1\}$.

For $y>x_{0}$ we have $G^{*}(\phi(y))=\theta(y)^{1/(n-1)}$ and so
$\phi'(y)(G^{*})'(\phi(y))=\theta(y)^{1/(n-1)-1}\theta'(y)/(n-1)$.
Then, using (\ref{eq:thm-eq3}) 
\[
\phi'(y)(y-\phi(y))(G^{*})'(\phi(y))=\theta(y)^{\frac{2-n}{n-1}}\frac{y-\phi(y)}{n-1}\theta'(y)=\psi(y)^{\frac{1}{n-1}}-1-\theta(y)^{\frac{1}{n-1}}=G^{*}(y)-1-G^{*}(\phi(y)).
\]
Hence 
\[
\phi'(y)(y-\phi(y))(G^{*})'(\phi(y))+(1-\phi'(y))G^{*}(\phi(y))=G^{*}(y)-1-\phi'(y)G^{*}(\phi(y)),
\]
and integrating from $x$ to $r$ 
\begin{equation}
-(x-\phi(x))G^{*}(\phi(x))=-(r-x)+\int_{x}^{r}G^{*}(y)dy+\int_{0}^{\phi(x)}G^{*}(y)dy.\label{eq:Gstarphi}
\end{equation}
Then, setting $x=x_{0}$ we recover $x_{0}=\int_{0}^{r}(1-G^{*}(y))dy$
so that a random variable with distribution function $G^{*}$ has
mean $x_{0}$.

Finally, from its construction we have that $\nu^{*}$ only puts mass
on $A$. Hence, from (\ref{eqn:PerkinsHPdefnphiD}), 
\[
\int_{x=0}^{\infty}\int_{y=z}^{\infty}(x-z)\nu^{*}(dx,dy)=\int_{0}^{\phi(z)}(x-z)G^{*}(dx)+\int_{z}^{r}(x-z)G^{*}(dx)=0.
\]

Now we prove that (\ref{eq:LST-past}) holds. Let $L^{*}(x,y)=(1+K)G^{*}(x)^{n-1}-KG^{*}(y)^{n-1}-\lambda^{*}x-\gamma^{*}-\int_{x_{0}}^{y}\eta^{*}(z)(x-z)dz$
and then $\mathcal{L}_{G^{*}}(\nu;\lambda^{*},\gamma^{*},\eta^{*})=\int_{0}^{\infty}\int_{0}^{\infty}L^{*}(x,y)\nu(dx,dy)+\lambda^{*}x_{0}+\gamma^{*}.$

For notational convenience, extend the domain of $\psi$ to $[0,r]$
by defining $\psi(x)=\theta(\phi^{-1}(x))$ for $x\in[0,x_{0})$.
Then, for $x<x_{0}$, $\psi'(x)=\frac{\theta'(\phi^{-1}(x))}{\phi'(\phi^{-1}(x))}=\frac{\psi'(\phi^{-1}(x))}{K+1}>0$
where the last equality comes from (\ref{eq:thm-eq1}). Moreover $\psi''\left(x\right)=\frac{\psi''(\phi^{-1}(x))}{(1+K)\phi'(\phi^{-1}(x))}<0$.
Thus $\psi$ is increasing on $[0,r]$, $\psi''(x)<0$ if $x\in(0,x_{0})$
and $\psi''(x)>0$ if $x\in(x_{0},r)$.

Fix $y\in[x_{0},r]$. For any $0\leq x\leq y$, 
\begin{eqnarray}
L^{*}(x,y) & = & (1+K)\psi(x)-K\psi(y)-\psi'(x_{0}+)x-\psi(x_{0})+x_{0}\psi'(x_{0}+)-\int_{x_{0}}^{y}\psi''(z)(x-z)dz\nonumber \\
 & = & (1+K)(\psi(x)-\psi(y))+(y-x)\psi'(y).\label{eqn:Lstar7}
\end{eqnarray}
We have $L^{*}(\phi(y),y)=(1+K)(\theta(y)-\psi(y))+(y-\phi(y))\psi'(y)$
and then by (\ref{eq:thm-theta}) and (\ref{eq:thm-eq2}), $L^{*}(\phi(y),y)=0$.
It is also clear that $L^{*}(y,y)=0$.

Differentiating (\ref{eqn:Lstar7}) with respect to $x$, 
\[
\frac{\partial L^{*}}{\partial x}(x,y)=(1+K)\psi'(x)-\psi'(y);\qquad\frac{\partial^{2}L^{*}}{\partial x^{2}}(x,y)=(1+K)\psi''(x).
\]
Then $\frac{\partial^{2}L^{*}}{\partial x^{2}}(x,y)<0$ on $(0,x_{0})$
and $\frac{\partial^{2}L^{*}}{\partial x^{2}}(x,y)>0$ on $(x_{0},y)$.
Since $\frac{\partial L^{*}}{\partial x}(\phi(y),y)=(1+K)\psi'(\phi(y))-\psi'(y)=(1+K)\frac{\psi'(y)}{K+1}-\psi'(y)=0$
and $\frac{\partial L^{*}}{\partial x}(y,y)=K\psi'(y)>0$, it follows
that $\frac{\partial L^{*}}{\partial x}(x,y)>0\textrm{ if }x\in(0,\phi(y)),$
$\frac{\partial L^{*}}{\partial x}(x,y)<0\textrm{ if }x\in(\phi(y),\tilde{x})$
and $\frac{\partial L^{*}}{\partial x}(x,y)>0\textrm{ if }x\in(\tilde{x},y)$,
where $\tilde{x}\in(x_{0},y)$ is such that $\frac{\partial L^{*}}{\partial x}(x,y)|_{x=\tilde{x}}=0$.
It follows that $L^{*}(x,y)<0$ for $x\in[0,\phi(y))\cup(\phi(y),y)$.

Now fix $y>r$. For any $0\leq x\leq y$, and writing $\tilde{\psi}(x)=\psi(x)-x/r$,
\begin{align*}
L^{*}(x,y) & =(1+K)\psi(x)-K-\psi'(x_{0}+)x-\psi(x_{0})+x_{0}\psi'(x_{0}+)-\int_{x_{0}}^{r}\psi''(z)(x-z)dz\\
 & =(1+K)(\psi(x)-1)+(r-x)\psi'(r-)=(1+K)\left(\psi(x)-\frac{x}{r}\right)=(1+K)\tilde{\psi}(x).
\end{align*}
If $x\in(r,y]$ then $L^{*}(x,y)=(1+K)\frac{1}{r}[r-x]<0$. Now suppose
$x\in(0,r)$. Since $\phi(r)=0$, we have $\psi'(0+)=\frac{1}{r}$
and thus $\tilde{\psi}'(0+)=0$. Further $\tilde{\psi}'(r-)=\frac{K}{r}>0$.
Then by the sign of $\psi''(x)$, we get $\tilde{\psi}'(x)$ is negative
and then positive on $(0,r)$. Since $\tilde{\psi}(0)=\tilde{\psi}(r)=0$,
we deduce that $\tilde{\psi}(x)<0$ on $(0,r)$. Thus $L^{*}(x,y)<0$
for $x\in(0,r)$.

From above analysis, we know $L^{*}(x,y)\leq0$ for any $(x,y)$ such
that $0\leq x\leq y$ and $y\geq x_{0}$. This means that $\forall\nu\in\mathcal{E}(x_{0})$
\begin{align*}
\mathcal{L}_{G^{*}}(\nu;\lambda^{*},\gamma^{*},\eta^{*}) & =\int_{0}^{\infty}\int_{0}^{\infty}L^{*}(x,y)\nu(dx,dy)+\lambda^{*}x_{0}+\gamma^{*}\\
 & =\int_{0}^{\infty}\int_{0}^{\infty}L^{*}(x,y)\nu(dx,dy)+\psi(x_{0})\leq\psi(x_{0})=\mathcal{L}_{G^{*}}(\nu^{*};\lambda^{*},\gamma^{*},\eta^{*}).
\end{align*}
Thus $\nu^{*}$ is a symmetric, atom-free Nash equilibrium from Proposition
\ref{thm:Past-regret}. \end{proof}

It remains to show that there exists a constant $r$ and functions
$(\phi,\theta,\psi)$ which satisfy the hypotheses of the Theorem~\ref{thm:Past-regret}
and hence that a symmetric atom-free Nash equilibrium always exists.
The following lemma is key in defining the appropriate entities.

\begin{lem}\label{lem:Past-regret-simplified} Let $J(u)$ solve
the ordinary differential equation 
\begin{equation}
J'(u)=\frac{J(u)+1-(1-u)^{1/(n-1)}}{(K+1)\left[1-u-J(u)^{n-1}\right]}\label{eq:thm-2-J(u)}
\end{equation}
subject to $J(0)=0$ and $u\geq0$. Let $u^{*}=\sup\left\{ u:J(u)<(1-u)^{1/(n-1)}\right\} $.

i) Define 
\[
H(z)=\frac{K}{(K+1)\left[z-J(1-z)^{n-1}\right]}
\]
on $[z^{*},1]$, where $z^{*}=1-u^{*}$. Then $z^{*}>0$, $H$ is
positive on $(z^{*},1)$ and $\int_{z^{*}}^{1}\exp\left(\int_{w}^{1}H(v)dv\right)dw<(K+1)$.

ii) Define 
\[
r=\frac{x_{0}(K+1)}{(K+1)-\int_{z^{*}}^{1}\exp\left(\int_{w}^{1}H(v)dv\right)dw}
\]
and 
\[
\Psi(z)=\frac{r}{K+1}\left[(K+1)-\int_{z}^{1}\exp\left(\int_{w}^{1}H(v)dv\right)dw\right]
\]
on $[z^{*},1]$. Let $\psi=\Psi^{-1}$ be the inverse function of
$\Psi$. Then $x_{0}<r<\infty$ and $\psi:[x_{0},r]\mapsto[0,1]$
is a strictly increasing and strictly convex function that satisfies
$\psi(r)=1$, $\psi'(r-)=\frac{K+1}{r}$ and $\psi''(r-)=\frac{K(K+1)}{r^{2}}$.

iii) Define 
\begin{equation}
\phi(y)=y-\frac{K\psi'(y)}{\psi''(y)}.\label{eq:thm-2-phi}
\end{equation}
Then $\phi:[x_{0},r]\mapsto[0,x_{0}]$ is a strictly decreasing function
with $\phi(x_{0})=x_{0}$.

iv) Define 
\[
\theta(y)=\psi(x_{0})+\frac{1}{K+1}\int_{x_{0}}^{y}\phi'(z)\psi'(z)dz
\]

Then $\theta:[x_{0},r]\mapsto[0,1]$ is a strictly decreasing function
with $\theta(x_{0})=\psi(x_{0})$. Moreover, $\theta(y)=\psi(y)-(y-\phi(y))\psi'(y)/(K+1)$.

\end{lem}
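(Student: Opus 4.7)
The plan is to establish the four parts in order, with Part (i) being the substantive ODE analysis and Parts (ii)--(iv) following by direct computation from the definitions. I would begin by invoking Cauchy--Lipschitz on the ODE (\ref{eq:thm-2-J(u)}) on subintervals where $1-u-J^{n-1}$ is bounded away from zero, producing a unique maximal non-decreasing solution $J$ on $[0,u^*)$ with $J\geq 0$ (since $J(0)=0$ and the numerator $J+1-(1-u)^{1/(n-1)}\geq 0$ whenever $J\geq 0$). Continuity yields $u^*>0$, and $u^*<1$ follows by contradiction: if $J(u)<(1-u)^{1/(n-1)}$ for all $u<1$ then monotonicity forces $J\equiv 0$, contradicting $J'(u)>0$ for $u>0$. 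Thus $z^*=1-u^*\in(0,1)$ and $H>0$ on $(z^*,1)$ by definition.

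For the crucial bound $I := \int_{z^*}^1\exp\!\bigl(\int_w^1 H(v)\,dv\bigr)\,dw < K+1$, the key observation is that monotonicity of $J$ gives $J(1-v)^{n-1}\leq J(u^*)^{n-1}=z^*$ on $v\in[z^*,1]$, so $v-J(1-v)^{n-1}\geq v-z^*$ and
\[
\exp\left(\int_w^1 H(v)\,dv\right) \;\leq\; \exp\left(\frac{K}{K+1}\ln\frac{1-z^*}{w-z^*}\right) \;=\; \left(\frac{u^*}{w-z^*}\right)^{K/(K+1)}.
\]
Integrating in $w$ over $[z^*,1]$ yields $I \leq (K+1)u^* < K+1$, which is the required strict inequality. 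The same estimate shows $\rho(z^*+) := \exp(\int_{z^*}^1 H\,dv)$ is finite, a fact needed later.

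Parts (ii)--(iv) are largely algebraic. For (ii), $\Psi'(z)=(r/(K+1))\exp(\int_z^1 H\,dv)>0$ makes $\Psi$ a strictly increasing bijection $[z^*,1]\to[x_0,r]$ (with $\Psi(z^*)=x_0$, $\Psi(1)=r$ by construction of $r$). Inverting and differentiating, using $\Psi''=-H\Psi'$, gives $\psi'=1/(\Psi'\circ\psi)$ and $\psi''=H(\psi)(\psi')^2>0$, so $\psi$ is strictly increasing and strictly convex. The boundary values $\psi'(r-)=(K+1)/r$ and $\psi''(r-)=K(K+1)/r^2$ follow from evaluating $H(1)=K/(K+1)$. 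For (iii), $\phi(y)=y-K\psi'/\psi''$ gives $\phi(r)=0$ by direct substitution and $\phi(x_0)=x_0$ from $\psi''(x_0+)=+\infty$ (the singularity of $H$ at $z^*$) together with the finiteness of $\psi'(x_0+)$. A direct computation using the ODE for $J$ shows $\phi'(y)=-(K+1)(n-1)J(1-\psi(y))^{n-2}J'(1-\psi(y))<0$ on $(x_0,r)$, giving strict monotonicity, and continuity plus the boundary values fix the range as $[0,x_0]$. For (iv), two integrations by parts on the defining integral $\int_{x_0}^y \phi'\psi'\,dz$, using $\phi(z)=z-K\psi'/\psi''$, yield the stated identity $\theta=\psi-(y-\phi)\psi'/(K+1)$; the value $\theta(x_0)=\psi(x_0)$ is immediate and $\theta'=\phi'\psi'/(K+1)<0$ gives monotonicity.

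The main obstacle is the integral bound in Part (i): both $r$ and $\Psi$ are defined in terms of this integral, and everything downstream depends on $r$ being a finite positive number exceeding $x_0$. The resolution relies crucially on monotonicity of $J$ together with the strict inequality $u^*<1$, which together turn the nested exponential integral into an explicit beta-type integral that can be bounded in closed form.
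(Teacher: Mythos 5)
Your proposal follows the paper's own route almost step for step: the same monotonicity bound $z-J(1-z)^{n-1}\geq z-z^{*}$, the same explicit integral $\int_{z^{*}}^{1}\bigl(\frac{1-z^{*}}{w-z^{*}}\bigr)^{K/(K+1)}dw=(K+1)u^{*}<K+1$, the same chain $\Psi\to\psi\to\phi\to\theta$ via $\Psi''=-H\Psi'$, and for (iii) a formula for $\phi'$ that is algebraically identical to the paper's $\phi'(y)=K H'(\psi(y))/H(\psi(y))^{2}+K+1<0$. Parts (ii) and (iv) are handled exactly as in the paper.

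One subsidiary claim as written would fail. You assert that ``the same estimate shows $\rho(z^{*}+)=\exp(\int_{z^{*}}^{1}H(v)dv)$ is finite.'' It does not: the bound $H(v)\leq\frac{K}{(K+1)(v-z^{*})}$ has divergent integral at $v=z^{*}$, so it is entirely consistent with $\int_{z^{*}}^{1}H=+\infty$. This finiteness (equivalently $\Psi'(z^{*}+)<\infty$, i.e.\ $\psi'(x_{0}+)>0$) is genuinely used in your part (iii): without it, $K\psi'/\psi''=K/(H(\psi)\psi')$ is of the indeterminate form $K/(\infty\cdot 0)$ at $x_{0}$ and $\phi(x_{0})=x_{0}$ does not follow. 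The repair is a local analysis of $g(z):=z-J(1-z)^{n-1}$ near $z^{*}$: the ODE gives $g'(z)=1+(n-1)J(1-z)^{n-2}\bigl[J(1-z)+1-z^{1/(n-1)}\bigr]/\bigl((K+1)g(z)\bigr)$, so $g(z)g'(z)$ tends to the positive constant $(n-1)(z^{*})^{(n-2)/(n-1)}/(K+1)$, whence $g(z)\asymp\sqrt{z-z^{*}}$ and $H(z)\asymp(z-z^{*})^{-1/2}$ is integrable. (The paper is itself terse at this point, asserting only that $\psi'(x_{0}+)$ is bounded and $\psi''(x_{0}+)=+\infty$; but your explicit justification is the one that is actually incorrect and should be replaced by the local analysis above.) The remainder of the proposal is sound.
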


\begin{proof} It is easily seen that $J(u)$ is a strictly increasing
function at least until $J(u)=\sqrt[n-1]{1-u}$, and that $u^{*}<1$.

i) Since $u^{*}<1$, $z^{*}=1-u^{*}>0$. Since $J$ is increasing,
for any $u\in(0,u^{*})$ $J(u)^{n-1}\leq J(u^{*})^{n-1}$ and thus
$1-u-J(u)^{n-1}\geq1-u-J(u^{*})^{n-1}=u^{*}-u$. This means for any
$z\in(z^{*},1)$ we have $z-J(1-z)^{n-1}\geq z-z^{*}>0$ and then
$0<H(z)\leq\frac{K}{(K+1)(z-z^{*})}$. Moreover, 
\begin{align*}
\int_{z^{*}}^{1}\exp\left(\int_{w}^{1}\frac{K}{(K+1)(v-z^{*})}dv\right)dw & =\int_{z^{*}}^{1}\exp\left(\frac{K}{K+1}\ln\frac{1-z^{*}}{w-z^{*}}\right)dw=\int_{z^{*}}^{1}\left(\frac{1-z^{*}}{w-z^{*}}\right)^{\frac{K}{K+1}}dw\\
 & =(1-z^{*})^{\frac{K}{K+1}}(1-z^{*})^{\frac{1}{K+1}}(K+1)=(1-z^{*})(K+1)<(K+1),
\end{align*}
and it follows that $\int_{z^{*}}^{1}\exp\left(\int_{w}^{1}H(v)dv\right)dw<(K+1)$.

ii) Since $0<(K+1)-\int_{z^{*}}^{1}\exp\left(\int_{w}^{1}H(v)dv\right)dw<(K+1)$,
we have that $x_{0}<r<\infty$. Taking derivatives of $\Psi$ on $(z^{*},1)$,
we find $\Psi'(z)=\frac{r}{K+1}\exp\left(\int_{z}^{1}H(v)dv\right)>0$
and $\Psi''(z)=-H(z)\Psi'(z)<0$. Then since $\Psi(z^{*})=x_{0}$
and $\Psi\left(1\right)=r$, $\Psi$ is a strictly increasing and
strictly concave function from $[z^{*},1]$ to $[x_{0},r]$. Thus
$\psi=\Psi^{-1}:[x_{0},r]\mapsto[0,1]$ is a strictly increasing and
strictly convex function satisfying $\psi(r)=1$.

Moreover, $\psi'(y)=\frac{1}{\Psi'(\psi(y))}$ and thus $\psi''(y)=-\frac{\Psi''(\psi(y))\psi'(y)}{\Psi'(\psi(y))^{2}}$.
Then since $\Psi(1)=r$, $\Psi'(1-)=\frac{r}{K+1}$ and $\Psi''(1-)=-\frac{rK}{(K+1)^{2}}$,
we get $\psi(r)=1$, $\psi'(r-)=\frac{1}{\Psi'(1-)}=\frac{K+1}{r}$
and $\psi''(r-)=-\frac{\Psi''(1-)\psi'(r-)}{\Psi'(1-)^{2}}=\frac{K(K+1)}{r^{2}}$.

iii) Letting $u=1-z$ in (\ref{eq:thm-2-J(u)}), we get $J'(1-z)=\frac{J(1-z)+1-z^{1/(n-1)}}{(K+1)\left[z-J(1-z)^{n-1}\right]}=\frac{H(z)}{K}\left(J(1-z)+1-z^{\frac{1}{n-1}}\right)$.
Thus, 
\begin{align}
H'(z) & =-\frac{K\left[1+(n-1)J(1-z)^{n-2}J'(1-z)\right]}{(K+1)\left[z-J(1-z)^{n-1}\right]^{2}}\nonumber \\
 & =-\frac{K+1}{K}H(z)^{2}\left[1+(n-1)J(1-z)^{n-2}\frac{H(z)}{K}\left(J(1-z)+1-z^{\frac{1}{n-1}}\right)\right]\nonumber \\
 & =-\frac{K+1}{K}H(z)^{2}-\frac{(K+1)(n-1)}{K^{2}}H(z)^{3}\left[J(1-z)^{n-1}+J(1-z)^{n-2}\left(1-z^{\frac{1}{n-1}}\right)\right]\label{eq:thm-2-H'(z)-J}
\end{align}
It is clear from the middle line above that $\frac{H'(z)}{H(z)^{2}}<-\frac{K+1}{K}$.

Note that $\phi'(y)=1-K\left(1-\frac{\psi'(y)\psi'''(y)}{\psi''(y)^{2}}\right)$.
Since $H(z)=-\frac{\Psi''(z)}{\Psi'(z)}=\frac{\psi''(\Psi(z))}{\psi'(\Psi(z))^{2}}$,
we have that 
\begin{equation}
H(\psi(y))=\frac{\psi''(y)}{\psi'(y)^{2}}\label{eq:thm-2-H(phi(y))}
\end{equation}
and thus $H'(\psi(y))\psi'(y)=\frac{\psi'''(y)}{\psi'(y)^{2}}-\frac{2\psi''(y)^{2}}{\psi'(y)^{3}}$.
Thus $1-\frac{\psi'(y)\psi'''(y)}{\psi''(y)^{2}}=-H'(\psi(y))\frac{\psi'(y)^{4}}{\psi''(y)^{2}}-1$
and therefore 
\begin{equation}
\phi'(y)=K\frac{H'(\psi(y))}{H(\psi(y))^{2}}+K+1.\label{eq:thm-2-phi_prime}
\end{equation}
It follows that $\phi'(y)<0$.

Since $\psi'(r-)=\frac{K+1}{r}$, $\psi'(y)>0$ and $\psi''(y)>0$
on $(x_{0},r)$, we know $\psi'(x_{0}+)$ is bounded. Then, since
$\frac{\psi''(x_{0}+)}{\psi'(x_{0}+)^{2}}=H(\psi(x_{0}))=H(z^{*})=+\infty$,
we get $\psi''(x_{0}+)=+\infty$. Substituting these values into (\ref{eq:thm-2-phi})
we obtain $\phi(x_{0})=x_{0}$.

iv) The statements about $\theta$ are either trivial, or follow as
in the derivation of (\ref{eq:10}).

\end{proof}

\begin{thm}\label{thm:Past-regret-simplified} Let $r$, $\psi$,
$\phi$, $\theta$ be as defined in Lemma~\ref{lem:Past-regret-simplified}.

Then there exists a symmetric, atom-free Nash equilibrium for the
problem for which $X_{\tau^{i}}^{i}$ has distribution $F$ where
$F(x)=0$ for $x\leq0$, $F(x)=1$ for $x\geq r$ and otherwise 
\[
F(x)=\begin{cases}
\theta(\phi^{-1}(x))^{\frac{1}{n-1}} & \textrm{ if }0<x<x_{0},\\
\psi(x)^{\frac{1}{n-1}} & \textrm{ if }x_{0}\leq x<r.
\end{cases}
\]
\end{thm}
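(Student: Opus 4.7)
The plan is to verify that the quadruple $(r,\phi,\psi,\theta)$ constructed in Lemma~\ref{lem:Past-regret-simplified} satisfies every hypothesis of Theorem~\ref{thm:Past-regret}, from which the existence of the symmetric, atom-free Nash equilibrium with the stated marginal $F$ follows at once. Nearly all of the qualitative content is already supplied by the Lemma: strict monotonicity and strict convexity of $\psi$ on $[x_0,r]$, the boundary data $\phi(x_0)=x_0$, $\psi(r)=1$, $\psi'(r-)=(K+1)/r$, $\psi''(r-)=K(K+1)/r^2$, and $\theta(x_0)=\psi(x_0)$, together with strict monotonicity of $\phi$ and $\theta$. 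The only regularity claim not stated verbatim in the Lemma is that $\psi$ is thrice differentiable; this I would deduce from the fact that the ODE~(\ref{eq:thm-2-J(u)}) has smooth right-hand side on the interior of its domain, so $J\in C^\infty$ there, hence $H$, $\Psi$, and finally $\psi=\Psi^{-1}$ are smooth on $(x_0,r)$.

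The substantive remaining task is to verify the three equations of the system $(\dagger)$. Equation~(\ref{eq:thm-eq2}) is an immediate rearrangement of the definition~(\ref{eq:thm-2-phi}) of $\phi$. Equation~(\ref{eq:thm-eq1}) is obtained by differentiating the defining formula $\theta(y)=\psi(x_0)+\frac{1}{K+1}\int_{x_0}^{y}\phi'(z)\psi'(z)\,dz$ from part (iv) of the Lemma. So both of these are routine.

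The main obstacle is equation~(\ref{eq:thm-eq3}), and this is where the ODE~(\ref{eq:thm-2-J(u)}) for $J$ must finally be used. My approach is to establish first the key identity
\[
\theta(y)=J(1-\psi(y))^{n-1}, \qquad y\in[x_0,r].
\]
To get this I would start from the identity $\theta(y)=\psi(y)-(y-\phi(y))\psi'(y)/(K+1)$ of Lemma~\ref{lem:Past-regret-simplified}(iv), substitute (\ref{eq:thm-eq2}) to rewrite $(y-\phi(y))\psi'(y)=K\psi'(y)^2/\psi''(y)$, then apply the relation $H(\psi(y))=\psi''(y)/\psi'(y)^2$ from the Lemma to obtain $\theta(y)=\psi(y)-K/[(K+1)H(\psi(y))]$. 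The defining formula $H(z)=K/\{(K+1)[z-J(1-z)^{n-1}]\}$ then gives $K/[(K+1)H(\psi(y))]=\psi(y)-J(1-\psi(y))^{n-1}$, which yields the desired identity.

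With this identity in hand, a chain-rule differentiation of $J(1-\psi(y))^{n-1}=\theta(y)$ gives
\[
J'(1-\psi(y))=-\frac{\theta'(y)}{(n-1)\,\theta(y)^{(n-2)/(n-1)}\,\psi'(y)}.
\]
Substituting this together with $J(1-\psi(y))^{n-1}=\theta(y)$ and $(1-\psi(y))^{1/(n-1)}$-type terms into the ODE~(\ref{eq:thm-2-J(u)}), and using the relation $(y-\phi(y))\psi'(y)=(K+1)(\psi(y)-\theta(y))$ (a one-line consequence of (\ref{eq:thm-eq2}) and the identity for $\theta$), rearranges to precisely~(\ref{eq:thm-eq3}). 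The algebra is mechanical once the identity is in place; the only conceptual point is recognising that the ODE~(\ref{eq:thm-2-J(u)}) is, under the change of variables $u=1-\psi(y)$, $J^{n-1}=\theta$, the same differential constraint as (\ref{eq:thm-eq3}). Having verified all hypotheses of Theorem~\ref{thm:Past-regret}, the conclusion of Theorem~\ref{thm:Past-regret-simplified}, including the explicit form of $F$, follows immediately from that theorem.
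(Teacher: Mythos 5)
Your proposal is correct and follows essentially the same route as the paper: both reduce the theorem to checking the hypotheses of Theorem~\ref{thm:Past-regret}, establish the key identity $\theta(y)=J(1-\psi(y))^{n-1}$ via exactly the same chain (the formula for $\theta$ from Lemma~\ref{lem:Past-regret-simplified}(iv), equation~(\ref{eq:thm-2-phi}), the relation $H(\psi(y))=\psi''(y)/\psi'(y)^{2}$, and the definition of $H$), and then convert the ODE~(\ref{eq:thm-2-J(u)}) into~(\ref{eq:thm-eq3}). Your final step differentiates the identity $\theta=J(1-\psi)^{n-1}$ directly rather than routing through $H'$ and $\phi'$ as in~(\ref{eq:thm-2-H'(z)-J}) and~(\ref{eq:thm-2-phi_prime}), which is a minor and equally valid computational variant, and your remark on the third differentiability of $\psi$ (via smoothness of the ODE) sensibly fills a point the paper leaves implicit.
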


\begin{proof} By (\ref{eq:thm-2-phi}) and (\ref{eq:thm-2-H(phi(y))}),
$\frac{1}{H(\psi(y))}=\frac{\psi'(y)^{2}}{\psi''(y)}=\frac{y-\phi(y)}{K}\psi'(y)$.
Hence $\theta(y)=\psi(y)-\frac{K}{(K+1)H(\psi(y))}=J(1-\psi(y))^{n-1}.$
Thus letting $z=\psi(y)$ in (\ref{eq:thm-2-H'(z)-J}), 
\[
H'(\psi(y))=-\frac{K+1}{K}H(\psi(y))^{2}-\frac{(K+1)(n-1)}{K^{2}}H(\psi(y))^{3}\left[\theta(y)+\left(1-\psi(y)^{\frac{1}{n-1}}\right)\theta(y)^{\frac{n-2}{n-1}}\right],
\]
and using (\ref{eq:thm-2-phi_prime}) and rearranging above equation,
\[
\frac{y-\phi(y)}{n-1}\frac{\phi'(y)\psi'(y)}{K+1}=-\left[\theta(y)+\left(1-\psi(y)^{\frac{1}{\left(n-1\right)}}\right)\theta(y)^{\frac{n-2}{n-1}}\right].
\]
Then substituting (\ref{eq:thm-eq1}) into above equation, (\ref{eq:thm-eq3})
follows. Therefore, $(\dagger)$ holds. Then using Theorem \ref{thm:Past-regret},
we obtain the symmetric, atom-free Nash equilibrium given in Theorem
\ref{thm:Past-regret-simplified}. \end{proof}

\begin{exa} As an example we consider a 3-player contest. Set $x_{0}=1$
and $n=3$. In Figures \ref{fig:Plot-of-G} and \ref{fig:Plot-of-g}
we give graphs of the optimal distribution $G^{*}(x)$ and its density
function $g^{*}(x)$ for various values of $K$.



Figure 1 here.



Figure 2 here

As we can seen in Figure \ref{fig:Plot-of-G}, the right endpoint
$r=r(K)$ of $G^{*}(x)$ decreases as $K$ increases. Moreover $r(K)$
tends to $n=3$ as $K$ decreases to $0$ and tends to $x_{0}$ as
$K$ increases to $+\infty$. We also find that $G^{*}(x)$ tends
to the equilibrium distribution of the original contest as $K$ decreases
to $0$ and $G^{*}(x)$ tends to the Heaviside function $\mathcal{H}_{\{x\geq x_{0}\}}$
as $K$ increases to $+\infty$. From Figure \ref{fig:Plot-of-g},
we find $g^{*}(x)$ jumps at $x_{0}$ if $K>0$ and $g^{*}(x)$ tends
to $+\infty$ as $y$ tends to $0$.

Intuitively, if $K$ is very large then the player does not aim for
large values of the stopped process, for then she risks a moderate
value of the maximum together with a small and losing value for the
stopped process. Because of the large penalty she wishes to avoid
such outcomes.

\end{exa}

\begin{exa} In the 2-player contest we can give explicit expressions
for several of the quantities of interest.

Set $n=2$. Substituting (\ref{eq:thm-eq1}), (\ref{eq:thm-eq2})
and (\ref{eq:thm-theta}) into (\ref{eq:thm-eq3}), we get 
\[
(y-\phi(y))\frac{1}{1+K}\phi'(y)\psi'(y)=(\psi(y)-1)-\psi(y)+\frac{K}{K+1}\frac{y-\phi(y)}{K}\psi'(y).
\]
Defining $\varphi(y)=y-\phi(y)$ the above equation simplifies to
$\varphi(y)\varphi'(y)=\frac{K+1}{\psi'(y)}$. Differentiating this
expression and using (\ref{eq:thm-eq2}) we have 
\[
\left[\varphi(y)\varphi'(y)\right]'=-(K+1)\frac{\psi''(y)}{\psi'(y)^{2}}=-(K+1)\frac{K}{\varphi(y)\psi'(y)}=-K\varphi'(y),
\]
and then 
\begin{equation}
\varphi(y)\varphi'(y)=-K\varphi(y)+K\varphi(r)+\varphi(r)\varphi'(r-).\label{eq:2-player-varphi-varphiprime-C}
\end{equation}
Since $\varphi(r)=r$ and $\psi'(r-)=\frac{K+1}{r}$ we have $\varphi'(r-)=1$.
Then (\ref{eq:2-player-varphi-varphiprime-C}) becomes $\varphi(y)\varphi''(y)=-K\varphi(y)+r(K+1),$
which using the boundary condition $\varphi(r)=r$ has solution 
\[
r-y=\frac{r(K+1)}{K^{2}}\ln\left((K+1)-\frac{K\varphi(y)}{r}\right)-\frac{r-\varphi(y)}{K}
\]
Using $\varphi(x_{0})=x_{0}-\phi(x_{0})=0$, we find 
\[
r=x_{0}\frac{K^{2}}{(K+1)\left[K-\ln\left(1+K\right)\right]}
\]
and therefore the implicit form of $\varphi(y)$ for $y\in[x_{0},r]$
is 
\[
y=x_{0}-\frac{\varphi(y)}{K}-\frac{x_{0}}{K-\ln\left(1+K\right)}\ln\left[1-\varphi(y)\frac{K-\ln\left(1+K\right)}{Kx_{0}}\right]
\]
and $\phi(y)=y-\varphi(y)$. It is possible to express $\psi$ and
$\theta$ in terms of $\varphi$, and thence the optimal distribution
$G{}^{*}$ of $X_{\tau^{i}}^{i}$ and the optimal conditional distribution
of $M_{\tau^{i}}^{i}$ given $X_{\tau^{i}}^{i}$, but these expressions
are not so compact. \end{exa}

\section{Contest with regret over failure to stop at the best time\label{sec:All-regrets}}

In this section we discuss the contest with regret over failure to
stop at the best time. A player experiences regret if she could have
won if she had stopped at the maximum value over the whole path. The
maximum value $M^{i}$ is given by

\[
M^{i}:=M_{H_{0}^{i}}^{i}=\sup_{0\leq t\leq H_{0}^{i}}X_{t}^{i}.
\]

\begin{thm} \label{thm:All-regret} There exists a symmetric, atom-free
Nash equilibrium for the problem for which $X_{\tau^{i}}^{i}$ has
law $F(x)$, where for $x\geq0$ 
\[
F(x)=\min\left\{ \sqrt[n-1]{\frac{x}{nx_{0}}},1\right\} .
\]
\end{thm}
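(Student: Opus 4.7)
The plan is to exploit the fact that under the ``all-regret'' interpretation, the whole-path maximum $M^i = \sup_{0 \le t \le H_0^i} X_t^i$ is a functional of the entire sample path of $X^i$, and so its law is independent of the agent's stopping rule $\tau^i$. Indeed, applying the optional stopping theorem to the bounded martingale $(X^i_{t \wedge H_0^i \wedge H_m^i})$ for $m \ge x_0$ gives
\begin{equation*}
\mathbb{P}(M^i \ge m) = \min\{x_0/m,\,1\}, \qquad m>0.
\end{equation*}
Consequently, every feasible joint law $\nu$ of $(X_{\tau^i}^i, M^i)$ has the same marginal in the $y$-coordinate, regardless of the stopping rule realising it.

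Suppose the opposing agents all adopt the common target distribution $F$. Then the expected payoff (\ref{eq:expected payoff}) of agent $i$ decomposes as
\begin{equation*}
(1+K)\,\mathbb{E}\bigl[F(X_{\tau^i}^i)^{n-1}\bigr] \;-\; K\,\mathbb{E}\bigl[F(M^i)^{n-1}\bigr],
\end{equation*}
and by the previous observation the second term is a constant depending only on $F$ and $K$. The agent's optimisation over $\nu$ therefore reduces to maximisation of $\mathbb{E}[F(X_{\tau^i}^i)^{n-1}]$ over feasible marginals $G\in\mathcal{A}_D(x_0)$ of $X_{\tau^i}^i$, which is exactly the problem (\ref{eq:Optimizationproblemnoregretssymmetric}) solved in Theorem~\ref{thm:no-regret}. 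This is the ``natural simplification'' alluded to in the introduction.

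To finish, I would invoke Proposition~\ref{prop:No-regret} with the same candidate data $G^*(x)=\min\{(x/(nx_0))^{1/(n-1)},\,1\}$, $\lambda^*=1/(nx_0)$ and $\gamma^*=0$ used in Theorem~\ref{thm:no-regret}. Because the $-K\mathbb{E}[F(M^i)^{n-1}]$ term is an additive constant with respect to the control, the verification of the Lagrangian inequality (\ref{eq:LST}) is literally identical to that in Section~\ref{sub:Derivation-no-regret}. A concrete equilibrium is realised by pairing any Skorokhod embedding of $G^*$ into $X^i$ with the (forced, strategy-independent) marginal for $M^i$.

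The only step with any content is the invariance claim in the first paragraph; everything else is a direct transcription from the standard case, so the main task is to articulate this reduction cleanly rather than to overcome any genuine analytic obstacle.
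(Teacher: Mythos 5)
Your proposal is correct and follows essentially the same route as the paper: the key observation in both is that $M^i=\sup_{0\le t\le H_0^i}X_t^i$ is a functional of the whole path, so $\mathbb{E}[F(M^i)^{n-1}]$ is an additive constant independent of $\tau^i$, reducing the problem to the standard case of Theorem~\ref{thm:no-regret}. Your explicit computation of the law of $M^i$ is a harmless extra; the paper simply asserts the strategy-independence.
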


\begin{proof} The agent's expected payoff is 
\[
(1+K)\mathbb{E}[F(X_{\tau^{i}}^{i})^{n-1}]-K\mathbb{E}[F(M_{H_{0}^{i}}^{i})^{n-1}]
\]
But the latter term is independent of the stopping rule used by the
agent. Hence, in determining her optimal strategy the agent need only
consider $(1+K)\mathbb{E}[F(X_{\tau^{i}}^{i})^{n-1}]$. Modulo the
initial constant, this is the same objective function as in the standard
case. \end{proof}

\begin{remark}The agent follows exactly the same Nash equilibrium
strategy as an agent in the original contest, in which there is no
penalty. The intuition behind is that the regret is determined by
$M_{H_{0}^{i}}^{i}$ but player cannot change the distribution of
$M_{H_{0}^{i}}^{i}$ by changing the choices of stopping time $\tau^{i}$.\end{remark}

\section{Derivation of the equilibrium distribution\label{sec:Derivation}}

This section is intended to illustrate how we derived the optimal
multipliers and the candidate Nash equilibrium in Sections \ref{sec:Without-regrets},
\ref{sec:Future-regrets} and \ref{sec:Past-regrets} and also the
boundary conditions in Section \ref{sec:Past-regrets}. The Lagrangian
approach gives a general method for finding the optimal solution,
which is distinct from the ideas in Seel and Strack~\cite{GamblinginContests},
and can be generalised to other settings.

\subsection{Contest without regret or with regret over stopping too soon}

Recall the definition of the Lagrangian $\mathcal{L}_{F}(G;\lambda,\gamma)$
for the optimization problem (\ref{eq:Optimizationproblemnoregretssymmetric})
and (\ref{eq:Optimization problem future failure symmetric}). (We
cover the more complicated case of regret from continuing beyond a
winning time in a separate section.)

Denote by $L_{F}(x)$ by the integrand in $\mathcal{L}_{F}$, so that
$\mathcal{L}_{F}(G;\lambda,\gamma)=\int_{0}^{\infty}L_{F}(x)G(dx)+\lambda x_{0}+\gamma$.
In order to have a finite optimal solution we require $L_{F}(x)\leq0$
on $[0,\infty)$. Let $\mathcal{D}_{F}$ be the set of $(\lambda,\gamma)$
such that $\mathcal{L}_{F}(\cdot;\lambda,\gamma)$ has a finite maximum.
Then $\mathcal{D}_{F}$ is defined by 
\[
\mathcal{D}_{F}=\{(\lambda,\gamma):L_{F}(x)\leq0\textrm{ on }[0,\infty)\}.
\]
In order to reach the maximum value, we require $G(dx)=0$ when $L_{F}(x)<0$.
This means that for $(\lambda,\gamma)\in\mathcal{D}_{F}$ the maximum
of $\mathcal{L}_{F}(\cdot;\lambda,\gamma)$ occurs at $G^{*}$ such
that $G^{*}(dx)=0$ when $L_{F}(x)<0$. Conversely we expect that
$G^{*}(dx)>0$ when $L_{F}(x)=0$. If the Nash equilibrium is symmetric
then we must have $G^{*}(x)=F(x)$ and then $L_{G^{*}}(x)\leq0$,
and $L_{G^{*}}(x)=0$ when $G^{*}(dx)>0$. Introduce $a=\inf\{x:G^{*}(x)>0\}$
and $b=\sup\{x:G^{*}(x)<1\}$ which are the limits on the support
of $G^{*}$.

\subsubsection{Contest without regret}

For the optimization problem (\ref{eq:Optimizationproblemnoregretssymmetric}),
$L_{F}(x)=F(x)^{n-1}-\lambda x-\gamma$. Observe that $0\leq F\left(0\right)^{n-1}$
so that if $(\lambda,\gamma)\in\mathcal{D}_{F}$ then $\gamma$ is
non-negative.

Since $L_{G^{*}}(x)=G^{*}(x)^{n-1}-\lambda x-\gamma$, we must have
$G^{*}(x)=\sqrt[n-1]{\lambda x+\gamma}$ at least when $G^{*}(dx)>0$.
Since we are searching for atom-free solutions we must have $G^{*}(x)=\sqrt[n-1]{\lambda x+\gamma}$
on the whole of the interval $[a,b]$. Moreover, since $G^{*}$ is
non-decreasing and not constant we must have $\lambda>0$.

Since $G^{*}$ is atom-free, $G^{*}(a)=0$ and hence $\lambda a+\gamma=0$.
Then by the non-negativity of $a$ and $\gamma$ and the positivity
of $\lambda$ it follows that $\gamma=0=a$. Thus $G^{*}(x)=\sqrt[n-1]{\lambda x}$
on $[0,b]$ for some $\lambda$ and $b$ which we must find.

For a feasible solution, $\int_{0}^{\infty}G^{*}(dx)=1$ and $\int_{0}^{\infty}xG^{*}(dx)=x_{0}$,
so that 
\[
1=\int_{0}^{b}d\left(\sqrt[n-1]{\lambda x}\right)=\sqrt[n-1]{\lambda b}.\hspace{10mm}x_{0}=\int_{0}^{b}xd\left(\sqrt[n-1]{\lambda x}\right)=\frac{\sqrt[n-1]{\lambda}}{n}b^{\frac{n}{n-1}}=\sqrt[n-1]{\lambda b}\frac{b}{n};
\]
Hence $b=nx_{0}$ and $\lambda=1/(nx_{0})$. This gives us that $G^{*}$
is the distribution function given in Theorem \ref{thm:no-regret}.

\subsubsection{Contest with regret from stopping too soon}

Now we have that $L_{F}\left(x\right)=(1+K)F(x)^{n-1}-Kx\int_{x}^{\infty}\frac{F(y)^{n-1}}{y^{2}}dy-\lambda x-\gamma.$
Then $L_{F}\left(0\right)=(1+K)F(0)^{n-1}-\gamma$ and as before,
if $(\lambda,\gamma)\in\mathcal{D}_{F}$ then $\gamma$ is non-negative.

Let $\psi(x)=G^{*}(x)^{n-1}$ then $L_{G^{*}}(x)$ becomes

\begin{equation}
L_{G^{*}}(x)=(1+K)\psi(x)-Kx\int_{x}^{\infty}\frac{\psi(y)}{y^{2}}dy-\lambda x-\gamma.\label{eq:psi future failure}
\end{equation}
Thus we expect $\psi(x)$ is the solution to $L_{G^{*}}(x)=0$ at
least when $\psi(dx)>0$.

Setting $L_{G^{*}}(x)=0$ and differentiating (\ref{eq:psi  future failure})
twice with respect to $x$, we find 
\[
(1+K)\psi''(x)x+K\psi'(x)=0.
\]
Thus $\psi(x)=C_{1}x^{\frac{1}{K+1}}+C_{2}$, where $C_{1}$ and $C_{2}$
are some constants, and then $G^{*}(x)=\sqrt[n-1]{C_{1}x^{\frac{1}{K+1}}+C_{2}}$
at least when $G^{*}(dx)>0$. Since we are seeking an atom-free solution
we must have $G^{*}(x)=\sqrt[n-1]{C_{1}x^{\frac{1}{K+1}}+C_{2}}$
on the whole interval of $[a,b]$, where $C_{1}>0$.

Substituting $\psi(x)=G^{*}(x)^{n-1}=(C_{1}x^{\frac{1}{K+1}}+C_{2})\wedge1$
into (\ref{eq:psi future failure}), and setting $L_{G^{*}}(x)=0$
we have $\forall x\in[a,b]$ 
\begin{align*}
0 & =(1+K)\left(C_{1}x^{\frac{1}{K+1}}+C_{2}\right)-Kx\int_{x}^{b}\frac{C_{1}y^{\frac{1}{K+1}}+C_{2}}{y^{2}}dy-Kx\int_{b}^{\infty}\frac{1}{y^{2}}dy-\lambda x-\gamma\\
 & =\left[(1+K)C_{1}b^{\frac{1}{K+1}}+KC_{2}-K-\lambda b\right]\frac{x}{b}+C_{2}-\gamma.
\end{align*}
This gives us optimal multipliers $\gamma^{*}=C_{2}$ and $\lambda^{*}=\frac{1}{b}\left[(1+K)C_{1}b^{\frac{1}{K+1}}+KC_{2}-K\right].$

Since $G^{*}$ is atom-free, $G^{*}(a)=0$ and hence $C_{1}a^{\frac{1}{K+1}}+C_{2}=0$,
and from the non-negativity of $a$ and $\gamma^{*}=C_{2}$ and the
positivity of $C_{1}$ it follows that $C_{2}=a=0$. Thus $G^{*}(x)=\sqrt[n-1]{C_{1}x^{\frac{1}{K+1}}}$
on $[0,b]$ for some $C_{1}$ and $b$ which can be identified using
the fact that $G^{*}$ corresponds to a probability distribution with
mean $x_{0}$. In particular, setting $N=1+(K+1)(n-1)$ for a feasible
solution, 
\[
\begin{cases}
1=\int_{0}^{b}d\left(\sqrt[n-1]{C_{1}x^{\frac{1}{K+1}}}\right)=\sqrt[n-1]{C_{1}b^{1/(K+1)}},\\
x_{0}=\int_{0}^{b}xd\left(\sqrt[n-1]{C_{1}x^{\frac{1}{K+1}}}\right)=\frac{\sqrt[n-1]{C_{1}}}{(K+1)(n-1)+1}b^{\frac{\left(K+1\right)\left(n-1\right)+1}{\left(K+1\right)\left(n-1\right)}}=\sqrt[N-1]{C_{1}^{K+1}b}\frac{b}{N}.
\end{cases}
\]
Hence $C_{1}=b^{-1/(K+1)}$ and then $b=Nx_{0}$ and $C_{1}=\sqrt[K+1]{\frac{1}{Nx_{0}}}$.
Thus $G^{*}(x)=\sqrt[N-1]{x/Nx_{0}}$ on $[0,Nx_{0}]$.

\subsection{Contest with regret over past failure}

Recall the definition of the Lagrangian $\mathcal{L}_{F}(\nu;\lambda,\gamma,\eta)$
for the optimization problem of Section~\ref{sec:Past-regrets}.
Let $L_{F}\left(x,y\right)$ be the integrand in the definition of
$\mathcal{L}_{F}$ as given in (\ref{eqn:LagrangianP}). In order
to have a finite optimal solution we require $L_{F}(x,y)\leq0$ on
$[0,\infty)\times[x_{0},\infty)$. Let $\mathcal{D}_{F}$ be the set
of $(\lambda,\gamma,\eta)$ such that $\mathcal{L}_{F}(\cdot;\lambda,\gamma,\eta)$
has a finite maximum. Then $\mathcal{D}_{F}$ is defined by 
\[
\mathcal{D}_{F}=\{(\lambda,\gamma,\eta):L_{F}(x,y)\leq0;x\geq0,y\geq x_{0}\}.
\]
For $(\lambda,\gamma,\eta)\in\mathcal{D}_{F}$ the maximum of $\mathcal{L}_{F}(\cdot;\lambda,\gamma,\eta)$
occurs at a measure $\nu^{*}$ such that $\nu^{*}(dx,dy)=0$ when
$L_{F}(x,y)<0$. Conversely we expect that $\nu^{*}(dx,dy)>0$ when
$L_{F}(x,y)=0$.

Let $G^{*}(x)=\nu^{*}(\{(u,y):u\leq x,x_{0}\leq y<\infty\})$ be the
marginal of $\nu^{*}$. If the Nash equilibrium is symmetric then
we must have $G^{*}(x)=F(x)$ and $L_{G^{*}}(x,y)=0$ when $\nu^{*}(dx,dy)>0$.
Motivated by the results of previous sections we expect $G^{*}$ to
place mass on an interval $[a,b]$ where $0=a<x_{0}<b$. In this section
we write $b=r$ for the upper limit.

It follows from the discussion before Theorem~\ref{thm:Past-regret}
that for an optimal solution either $X_{\tau^{i}}^{i}=M_{\tau^{i}}^{i}$
or $X_{\tau^{i}}^{i}=\phi(M_{\tau^{i}}^{i})$ for some decreasing
function $\phi$. Hence, for $x_{0}\leq y\leq r$ we expect $\nu^{*}(dx,dy)>0$
if and only if either $x=y$ or $x=\phi(y)$. Let $\psi(x)=G^{*}(x)^{n-1}$.
Then $L_{G^{*}}(x,y)$ becomes 
\[
L(x,y):=L_{G^{*}}(x,y)=(1+K)\psi(x)-K\psi(y)-\lambda x-\gamma-\int_{x_{0}}^{y}\eta(z)(x-z)dz.
\]
Fixing $y\in(x_{0},r)$, and using $L(x,y)\leq0$ for any $0\leq x\leq y$,
together with $L(\phi(y),y)=0$ we expect $\frac{\partial L}{\partial x}(\phi(y),y)=0$.

Thus $\forall y\in(x_{0},r)$, $\psi$ and $\phi$ must solve \begin{numcases}{}
L(y,y)=\psi(y)-\lambda y-\gamma-\int_{x_{0}}^{y}\eta(z)(y-z)dz=0,\label{eq:L(y,y)}\\ 
L(\phi(y),y)=(1+K)\psi(\phi(y))-K\psi(y)-\lambda\phi(y)-\gamma-\int_{x_{0}}^{y}\eta(z)(\phi(y)-z)dz=0,\label{eq:L(phi,y)}\\ 
\frac{\partial L}{\partial x}(\phi(y),y)  =(1+K)\psi'(\phi(y))-\lambda-\int_{x_{0}}^{y}\eta(z)dz=0.\label{eq:dL/dx} 
\end{numcases} Differentiating (\ref{eq:L(y,y)}) with respect to $y$, yields 
\begin{equation}
\psi'(y)-\lambda-\int_{x_{0}}^{y}\eta(z)dz=0.\label{eq:entasolved}
\end{equation}
Comparing (\ref{eq:dL/dx}) with (\ref{eq:entasolved}), we find $\psi'(y)=(1+K)\psi'(\phi(y)).$
If we now set $\theta(y)=\psi(\phi(y))$, then (\ref{eq:thm-eq1})
follows. From (\ref{eq:entasolved}) we find 
\begin{equation}
\psi''(y)-\eta(y)=0,\label{eq:entaRepresentation}
\end{equation}
Then, differentiating (\ref{eq:L(phi,y)}) with respect to $y$, and
using (\ref{eq:dL/dx}) we obtain $-K\psi'(y)-\eta(y)(\phi(y)-y)=0$,
and (\ref{eq:thm-eq2}). Finally, (\ref{eq:thm-eq3}) comes directly
from (\ref{eqn:PerkinsHPdefnphiD}) on noting that $G(\phi(m))=\theta(m)^{1/(n-1)}$.

Next we deduce the boundary conditions. First note that from (\ref{eqn:PerkinsHPdefnphiI})
we can infer that $\phi(x_{0})=x_{0}$ and $\phi(r)=0$. Hence $\theta(x_{0})=\psi(x_{0})$
and $\theta(r)=0$. Given that (\ref{eq:thm-eq1}) and (\ref{eq:thm-eq2})
hold, as in the proof of Theorem \ref{thm:Past-regret}, we have that
(\ref{eq:10}) holds. Letting $y=r$ and using $\psi(r)=1$, $\phi(r)=0$
and $\theta(r)=0$, we find $0=\frac{-r}{K+1}\psi'(r-)+\psi(r)$ and
hence $\psi''(r-)=\frac{K+1}{r}$. Further, letting $y=r$ in (\ref{eq:thm-eq2})
we get $\psi''(r-)=K(K+1)/r^{2}$ as required.

Lastly, we derive the optimal multipliers which we write as $\eta^{*}$,
$\lambda^{*}$ and $\gamma^{*}$. From (\ref{eq:entaRepresentation}),
$\eta^{*}(y)=\psi''(y)$ for $y\in(x_{0},r)$. Then, from (\ref{eq:entasolved}),
$\lambda^{*}=\psi'(y)-\int_{x_{0}}^{y}\eta^{*}(z)dz=\psi'(y)-\int_{x_{0}}^{y}\psi''(z)dz=\psi'(x_{0}+)$.
Finally (\ref{eq:L(y,y)}) yields, 
\[
\gamma^{*}=\psi(y)-\lambda y-\int_{x_{0}}^{y}\eta^{*}(z)(y-z)dz=\psi(y)-y\psi'(x_{0}+)-\int_{x_{0}}^{y}\psi''(z)(y-z)dz=\psi(x_{0})-x_{0}\psi'(x_{0}+).
\]

\bibliographystyle{plain}
\bibliography{gambling}

\end{document}